\documentclass[11pt]{article}

\usepackage{amsfonts}
\usepackage{amsmath}
\usepackage{amssymb}
\usepackage{amsthm}



\usepackage{graphicx}

\usepackage{psfrag}

\bibliographystyle{plain}



\pagestyle{plain} 
\textwidth=160mm

\topmargin=0mm \textheight=215mm

\oddsidemargin=0cm

\theoremstyle{plain}
\newtheorem{theorem}{Theorem}
\newtheorem{proposition}[theorem]{Proposition}
\newtheorem{definition}[theorem]{Definition}
\newtheorem{lemma}[theorem]{Lemma}

\theoremstyle{definition}

\newtheorem{remark}[theorem]{Remark}

\numberwithin{equation}{section}
\numberwithin{theorem}{section}



\DeclareMathOperator{\Tr}{tr}

\DeclareMathOperator{\diag}{diag}

\newcommand{\bs}[1]{{\boldsymbol{#1}}}

\newcommand{\R}{\mathbf{R}}
\newcommand{\N}{\mathbf{N}}

\begin{document}
\title{{Linear algebra of the permutation invariant Crow--Kimura model of prebiotic evolution}}

\author{Alexander S. Bratus$^{1,2}$, Artem S. Novozhilov$^{{3},}$\footnote{Corresponding author: artem.novozhilov@ndsu.edu}\,\,, Yuri S. Semenov$^{2}$\\[3mm]
\textit{\normalsize $^\textrm{\emph{1}}$Faculty of Computational Mathematics and Cybernetics,}\\[-1mm]
\textit{\normalsize Lomonosov Moscow State University, Moscow 119992, Russia}\\[2mm]
\textit{\normalsize $^\textrm{\emph{2}}$Applied Mathematics--1, Moscow State University of Railway Engineering,}\\[-1mm]\textit{\normalsize Moscow 127994, Russia}\\[2mm]
\textit{\normalsize $^\textrm{\emph{3}}$Department of Mathematics, North Dakota State University, Fargo, ND 58108, USA}}

\date{}

\maketitle

\begin{abstract}
A particular case of the famous quasispecies model --- the Crow--Kimura model with a permutation invariant fitness landscape --- is investigated. Using the fact that the mutation matrix in the case of a permutation invariant fitness landscape has a special tridiagonal form, a change of the basis is suggested such that in the new coordinates a number of analytical results can be obtained. In particular, using the eigenvectors of the mutation matrix as the new basis, we show that the quasispecies distribution approaches a binomial one and give simple estimates for the speed of convergence. Another consequence of the suggested approach is a parametric solution to the system of equations determining the quasispecies. Using this parametric solution we show that our approach leads to exact asymptotic results in some cases, which are not covered by the existing methods. In particular, we are able to present not only the limit behavior of the leading eigenvalue (mean population fitness), but also the exact formulas for the limit quasispecies eigenvector for special cases. For instance, this eigenvector has a geometric distribution in the case of the classical single peaked fitness landscape. On the biological side, we propose a mathematical definition, based on the closeness of the quasispecies to the binomial distribution, which can be used as an operational definition of the notorious error threshold. Using this definition, we suggest two approximate formulas to estimate the critical mutation rate after which the quasispecies delocalization occurs.
\paragraph{\small Keywords:} Quasispecies model, leading eigenvalue, single peaked fitness landscape,  Crow--Kimura model, error threshold
\paragraph{\small AMS Subject Classification:} Primary:  92D15; 92D25; Secondary: 15A18
\end{abstract}

\section{Introduction}
In 1971 Manfred Eigen published a very influential paper \cite{eigen1971sma}, in which he proposed a far-reaching theory for the origin of life, that included biological, physical, chemical, mathematical, and information-theoretic aspects. The details of this theory, dubbed as \textit{quasispecies theory}, were developed in various papers by Eigen himself and by many others; a comprehensive summary can be found in \cite{eigen1989mcc}, a shorter version with the basic details and conclusions is \cite{eigen1988mqs}, for a more recent review see \cite{jainkrug2007}. In the present paper we are going to discuss only the aspects pertaining to the mathematical side of the story, therefore, to make the text self contained, we start with a formulation of several mathematical models (systems of ordinary differential or recurrence equations), each of which was called a quasispecies model in the literature. These models bear a close resemblance to each other and produce similar results; it is important, however, from the very beginning to state precisely which particular model is analyzed.

We begin with a more natural approach with discrete time steps. Imagine a population of sequences (individuals) of the length $N$, each of which is composed of a two letter alphabet (0s and 1s, for instance); we have $2^N$ different sequences, which we denote $\sigma_i,\,i=1,\ldots,2^N$. We also use the notation $n(\sigma_i,t)$ for the total number of the sequences of type $\sigma_i$ at the time moment $t$. Assume that sequence $\sigma_i$ begets $w(\sigma_i)$ offspring on average; the reproduction is error-prone, so that upon any reproduction event the probability that sequence $\sigma_j$ produces sequence of type $\sigma_i$ is $s_{ij}=s (\sigma_j\to\sigma_i)$ and $s_{jj}=1-\sum_{i=1,i\neq j}^{2^N}{s_{ij}}$. Simple bookkeeping for the population size at the next time moment leads to
\begin{equation}\label{eq0:1}
    n(\sigma_i,t+1)=\sum_{j=1}^{2^N}w(\sigma_j)s_{ij}n(\sigma_j,t),\quad i=1,\ldots,2^N.
\end{equation}
For many cases it is more convenient (see, e.g., \cite{karev2010} for a discussion) to consider the system for the corresponding frequencies
$$p(\sigma_i,t)=\frac{n(\sigma_i,t)}{\sum_{j=1}^{2^N}n(\sigma_j,t)}\,,$$ which takes the form
\begin{equation}\label{eq0:2}
    \bs{p}(t+1)=\frac{\bs{SW}\bs{p}(t)}{\overline{w}(t)}\,,
\end{equation}
where $\bs{p}(t)=\bigl(p(\sigma_1,t),\ldots,p(\sigma_{2^N},t)\bigr)^\top\in \R^{2^N}$ for any $t\in \mathbf{N}$, $^\top$ denotes transposition, matrix $\bs{W}$ is a diagonal matrix, $\bs{W}=\diag\bigl(w(\sigma_1),\ldots,w(\sigma_{2^N})\bigr)$, which is commonly referred as \textit{the fitness landscape}, $\bs{S}=(s_{ij})_{2^N\times 2^N}$, and $\overline{w}(t)$ is the mean (Wrightian) fitness of the population defined~by
\begin{equation}\label{eq0:3}
    \overline{w}(t)=\sum_{j=1}^{2^N}w(\sigma_j)p(\sigma_j,t).
\end{equation}
To further specify the model it is usual to make an assumption that the mutation probability is constant, say $s$, per site per replication and mutations occur independently of each other. Then the mutation matrix $\bs{S}$ can be specified as a matrix depending only on one parameter~$s$:
\begin{equation}\label{eq0:4}
    s_{ij}=s(\sigma_j\to\sigma_i)=s^{H(\sigma_i,\sigma_j)}(1-s)^{N-H(\sigma_i,\sigma_j)},
\end{equation}
where $H(\sigma_i,\sigma_j)$ denotes the standard Hamming distance between sequences $\sigma_i$ and $\sigma_j$.

Summarizing, the model \eqref{eq0:2} is defined by the sequence length $N$, the mutation probability per site per sequence per replication event $s$, and by the (Wrightian) fitness landscape $\bs{W}$.

Majority of biological systems do not possess the property of giving birth at the same time moment, therefore it is desirable to formulate an analogue of \eqref{eq0:2} in continuous time. The phenomenological model employed by Eigen et al. \cite{eigen1989mcc,eigen1988mqs} takes the form of the system of ordinary differential equations
\begin{equation}\label{eq0:5}
    \bs{\dot{p}}(t)=\bs{SW}\bs{p}(t)-\overline{w}(t)\bs{p}(t),
\end{equation}
where all the notations as before, however, the constants $w(\sigma_i)$ should be now interpreted as Malthusian fitnesses, since in the statement of the model Eigen et al. assumed that $w(\sigma_i)=b(\sigma_i)-d(\sigma_i)$, where $b(\sigma_i)$ and $d(\sigma_i)$ are the birth and death \textit{rates} respectively. Note that the stationary point of \eqref{eq0:5} coincides precisely with the stationary point of \eqref{eq0:2}. We are not aware of any mechanical derivation of the model \eqref{eq0:5} as a limit of some system when time unit approaches zero. The usual strategy of the Markov process theory fails in this particular case because the nature of the model \eqref{eq0:5} implies that several elementary events (births) can occur at the same time. These remarks notwithstanding, system \eqref{eq0:5} plus additional assumption \eqref{eq0:4} is what is usually referred to as the quasispecies model in the literature on the prebiotic evolution.

Another approach to continuous time is to start with the assumptions that the birth events and mutations are separated on our time scale, i.e., the mutations occur during the time life of the sequences, and birth events are error free. In this case, denoting $\mu_{ij}=\mu(\sigma_j\to\sigma_i)$ the mutation rates of sequence $\sigma_j$ to $\sigma_i$, $\mu_{jj}=-\sum_{i\neq j}\mu_{ij}$, and $m(\sigma_i)=b(\sigma_i)-d(\sigma_j)$ the Malthusian fitness of the sequence $\sigma_i$, the standard bookkeeping leads to the system
\begin{equation}\label{eq0:6}
    \bs{\dot{p}}(t)=\bigl(\bs{M}-\overline{m}(t)\bs{I}\bigr)\bs{p}(t)+\bs{\mathcal M}\bs{p}(t),
\end{equation}
where $\bs{M}=\diag\bigl(m(\sigma_1),\ldots,m(\sigma_{2^N})\bigr)$ is the fitness landscape, $\bs{\mathcal M}=(\mu_{ij})_{2^N\times 2^N}$ is the mutation matrix, $\bs{I}$ is the identity matrix, $\overline{m}(t)=\sum_{j=1}^{2^N}m(\sigma_j)p(\sigma_j,t)$ is the mean population fitness.

Model \eqref{eq0:6} has a straightforward mechanical derivation in terms of elementary processes, however it is not quite clear how realistic is the assumption about separation of the reproduction events and mutations. To supplement model \eqref{eq0:6} we can assume that if the mutation rate per site per sequence per replication event is constant, say $\mu$, then the mutation matrix $\bs{\mathcal M}$ can be written as
\begin{equation}\label{eq0:7}
    \mu_{ij}=\begin{cases}\mu,&\mbox{if }H(\sigma_i,\sigma_j)=1,\\
    0,&\mbox{if }H(\sigma_i,\sigma_j)>1,\\
    -N\mu,&\mbox{if }H(\sigma_i,\sigma_j)=0.
    \end{cases}
\end{equation}
Model \eqref{eq0:6}--\eqref{eq0:7} was advocated by E. Baake and co-authors \cite{baake1999} and was often dubbed as a paramuse model, due to parallel mutation selection scheme. This same model actually was studied before the Eigen quasispecies theory as the model of mutation--selection balance in a haploid asexual population, and was given some consideration in Chapter 6 in the textbook by Crow and Kimura \cite{crow1970introduction}, therefore model \eqref{eq0:6} is also often called the Crow--Kimura quasispecies model. It is interesting to note that the fact that the models studied within the framework of quasispecies theory has been long an object of study in theoretical population genetics was stressed only relatively recently (e.g., \cite{baake1999,wilke2005quasispecies}, an intricate mathematical theory of mutation--selection balance is dealt with in \cite{burger2000mathematical}).

Another quite natural way to deduce model \eqref{eq0:6} is to start with \eqref{eq0:2} and consider its limit for small generation time \cite{hofbauer1985selection}. In this way model \eqref{eq0:6} is the limit when generation time tends to zero. Moreover, this limit procedure helps relate the Wrightian and Malthusian fitnesses:
\begin{equation}\label{eq0:8}
    w(\sigma_i)=\exp\bigl(m(\sigma_i)\Delta t\bigr)\approx 1+m(\sigma_i)\Delta t,\quad \Delta t\to0,
\end{equation}
and the mutations probabilities and corresponding mutation rates
\begin{equation}\label{eq0:9}
    s_{ij}\approx \delta_{ij}+\Delta t\,\mu_{ij},\quad \Delta t\to 0,
\end{equation}
where $\delta_{ij}$ is the Kronecker delta.

All three models \eqref{eq0:2}, \eqref{eq0:5}, \eqref{eq0:6} possess similar properties that can be summarized as follows:
\begin{enumerate}
\item Models \eqref{eq0:2}, \eqref{eq0:5}, and \eqref{eq0:6}, under some mild technical assumptions on the used matrices, admit the only globally stable stationary point $\lim_{t\to\infty}\bs{p}(t)=\bs{p}\in\R^{2^N}$, which was called \textit{the quasispecies} by Eigen.
\item For some fitness landscapes (and in particular for the so-called \textit{single} or \textit{sharply peaked fitness landscape} when $w(\sigma_1)>w(\sigma_2)=\ldots=w(\sigma_{2^N})$) there exists a sharp transition, called the \textit{error threshold}, which separates the regime where the fittest sequence (sometimes called the \textit{master sequence}) dominates the population, from the regime where the distribution of the frequencies of different types of sequences is close to uniform.
\end{enumerate}
The first fact is a simple consequence of the form of the matrix $\bs{SW}$ (or $\bs{M}+\bs{\mathcal M}$) and the Perron--Frobenius theorem: Recall that all three systems \eqref{eq0:2}, \eqref{eq0:5}, and \eqref{eq0:6} are \textit{almost} linear (cf. \eqref{eq0:1} and \eqref{eq0:2}). The second observation of the quasispecies theory is quite difficult to discuss mathematically, because in many cases we have only a qualitative definition of the error threshold, the literature survey shows that in many instances nonequivalent definitions are used~\cite{tejero2011relationship}.

It is not our goal to discuss at this point various aspects of the error threshold definition, we defer it to the last section of our manuscript. Here we would like to mention that very different approaches were used to study the quasispecies systems. The linear nature of the problem notwithstanding, the fact that the dimension of the systems is $2^N$ precludes any direct calculation for even moderate values of $N$.

A first direction to simplify one of the quasispecies system is to note that if the fitness landscape is \textit{permutation invariant}, i.e., depends not on the sequence itself, but rather on the sequence composition such that permutations of zeros and ones in this sequence do not change any sequence's fitness, the dimension of the problem is reduced significantly: from $2^N$ different types of sequences to $N+1$ classes of sequences, each of which contains $\binom{N}{i}$ different types of sequences, $i=0,\ldots,N$. Using this representation (for the models \eqref{eq0:2} and \eqref{eq0:5} the explicit form for the matrix $\bs{S}$ is given in, e.g., \cite{nowak1989error}; for \eqref{eq0:6} and for the matrix $\bs{\mathcal M}$ see the next section), extensive numerical results were obtained, e.g., \cite{nowak1989error,schuster1988stationary,swetina1982self}, that also confirmed various heuristic approximations.

Another fruitful approach to tackle the quasispecies model is to apply methods from statistical physics. In particular, it was shown \cite{leuthausser1986exact,leuthausser1987statistical} that the Eigen model \eqref{eq0:2} is equivalent to two-dimensional classical Ising model in statistical physics. This correspondence generated a stream of papers analyzing one or another incarnation of the Eigen model, see, e.g., \cite{baake2001mutation} for some details. In this last paper it is also shown that the Crow--Kimura model is equivalent to the so-called Ising quantum spin chain. In particular, a number of ``exact'' results appeared in the literature, e.g., \cite{baake1997ising,baake2001mutation,galluccio1997exact,saakian2006ese}. The word ``exact'' is in quotes because the meaning was borrowed from statistical physics and means that it is possible to obtain some analytical solutions for the quantities of interest under special scalings and limit procedures. It is well known that for the exact solutions of the quasispecies models in the usual sense it is necessary to consider either some approximations (e.g., to prohibit backward mutations to the fittest sequence, for a representative example see \cite{wiehe1997model}) or restrict the attention to the so-called Fujiyama fitness landscape. In the latter case the fitness ladnscape is defined as a multiplicative one for \eqref{eq0:2} or \eqref{eq0:5}, when the fitness of the class $j$ is given by $r^j$ for some constant $r$, or additive fitness landscape for \eqref{eq0:6}, when the fitness of the class $j$ is $rj$ for some constant $r$.

The special structure of the matrices $\bs{SW}$ or $\bs{M}+\bs{\mathcal M}$ allowed to obtain some results, especially those concerning the spectral properties of these matrices, using the standard methods of linear algebra (see \cite{garcia2002linear} for the model \eqref{eq0:6} and \cite{rumschitzki1987spectral} for \eqref{eq0:2} and \eqref{eq0:5}). We pursue a similar rout in our text, focusing our attention initially on finite dimensional problems, i.e., without invoking any limit procedures. To make progress, we use an additional simplifying assumption that the only fitness landscapes we consider are permutation invariant for the model \eqref{eq0:6}. This assumption implies that the mutation matrix is tridiagonal and allows obtaining analytical insights about the behavior of both the mean population fitness and the quasispecies distribution depending on the mutation parameter $\mu$ and given fitness landscape $\bs M$.

We stress that the main results we present in the first part of our manuscript are exact in the usual strict sense. The utility of such approach can be argued as follows. In \cite{Hermisson2002} a general result, which is formulated in the form of a maximal principle (see Section \ref{example:1} for details), was obtained (and a significant generalization is presented in \cite{Baake2007}) about quasispecies models with permutation invariant fitness landscapes. The maximum principle predicts, in the limit of the infinite sequence length, the mean population fitness (the leading eigenvalue) given the fitness landscape and mutational scheme, provided some technical conditions on the fitness landscape and the mutation rates are satisfied. While the results found in \cite{Hermisson2002} are quite general, they still require some scalings, limits, approximations, and technical requirements on the fitness landscape and mutational rates. In Section \ref{example:1} we present a natural example, formulated along the lines of the single peaked fitness landscape, to compare our approach, which may not lead to a concise closed form expressions in the case of finite sequence length, with the maximum principle. First, the maximum principle cannot be applied to our example due to some technical conditions. Second, a formal application of the maximum principle to our example leads to erroneous conclusions, thus confirming that at least for some cases our exact approach works whereas known approaches fail. We also stress that the solutions we put forward in this manuscript allow to obtain not only the mean population fitness, but also the quasispecies distribution. In particular, in Section \ref{example:2} we present an exact asymptotic solution to the quasispecies model with the single peaked fitness landscape, which include both the leading eigenvalue and the corresponding eigenvector. This asymptotic solution was not known for a while despite a number of attempts to tackle this system (e.g., \cite{galluccio1997exact,rumschitzki1987spectral}), and was written down without convergence proof in \cite{saakian2004ese}. We provide a mathematically rigorous derivation of this result with an explicit estimate of the speed of convergence.

To summarize, our main contribution in the present manuscript is trifold. First, it is a purely algebraic approach to the well known Crow--Kimura model, which is based mainly on the reformulation of the original problem in the coordinates of the basis composed of the eigenvectors of the mutation matrix (we also use a similar approach to analyze \eqref{eq0:5} elsewhere \cite{semenov2014}). Using this approach we immediately obtain estimates of the speed of convergence of the quasispecies to the binomial distribution and also write down a parametric solution to the basic eigenvalue problem. Second, to illustrate the generality of the suggested approach we give an example of a system, which is not covered by the existing methods, where our approach allows explicit exact calculations. We also provide a simple exact solution for the asymptotic quasispecies distribution in the case of the infinite genome length for the single peaked fitness landscape. Third, using the analytical insights from the previous parts, we propose an operational definition of the notorious error threshold and present two approximate formulas for the critical mutation rate in the biologically relevant case of finite sequence length.


The rest of the paper is organized as follows. The notations and the permutation invariant Crow--Kimura model are introduced in detail in Section 2. Section 3, albeit technical in nature, provides the main mathematical tools for the subsequent analysis. In particular, the special structure of the mutation matrix allows full characterization of the spectral properties of this matrix together with additional information. As an illustration of the suggested approach, in Section 4 we find estimates of the speed of convergence of the quasispecies distribution to the binomial distribution, which occurs for any fitness landscape. This last fact implies that the error threshold, defined as the mutation rate after which the distribution of different classes of sequences is close to binomial, inherent in this weaker sense to any quasispecies model. Another illustration of the suggested approach is a parametric solution of the system of equations determining the quasispecies in Section 5. Applications of the found parametric formulas lead in Section 6 to analysis of two particular fitness landscapes. We show in Section \ref{example:1} by way of an example that our exact formulas can be applied in situations where the known results cannot be used, and in Section \ref{example:2} we derive an exact asymptotic solution to the quasispecies model with the single peaked fitness landscape. An important question is how to define the error threshold and approximate the critical mutation rate for it, such that the model predictions can be used for studying biological populations of, e.g., viruses. In Section 7 we argue that a natural mathematical definition of the error threshold is the closeness of the quasispecies to the binomial distribution, and present heuristic approximate formulas, based on our definition, to calculate the critical mutation rate. In Section 8, speculative in nature, we discuss the notion of the error threshold, present yet another approximate formula for it, which is based on the parametric solution from Section 6 and simple geometric ideas. Finally, Appendices contain additional results, which extend the presentation of the main text.

\section{The Crow--Kimura model of prebiotic evolution with a permutation invariant fitness landscape}
We start by precisely formulating the mathematical model we study. This is especially important because in the literature similar but different in details models can be called the quasispecies or Crow--Kimura models, see the discussion in the previous section.

The Crow--Kimura model of prebiotic evolution for a permutation invariant fitness landscape for $N+1$ classes of sequences (hence each class contains $\binom{N}{i}$ different types of sequences, $i=0,\ldots,N$) takes the form
\begin{equation}\label{eq1:1}
    \bs{\dot p}(t)=(\bs{M}+\mu \bs{Q})\bs p(t)-{\overline{m}(t)}\bs{p}(t).
\end{equation}
Here
\begin{equation}\label{eq1:2}
\bs{p}(t)=\bigl(p_0(t),\ldots,p_N(t)\bigr)^\top\in S_{N+1}=\{\bs{p}\in\mathbf{R}^{N+1}\mid \bs{p}\geq 0,\,\sum_{i=0}^Np_i=1\}
\end{equation}
is the vector of frequencies of the classes of sequences normalized to belong to simplex $S_{N+1}$; $\bs{M}=\diag(m_0,\ldots,m_N)$ is a diagonal matrix that determines the (Malthusian) fitnesses of different classes of sequences, the corresponding vector $\bs{m}=(m_0,\ldots,m_N)^\top\in \R^{N+1}$; $\bs{Q}=\bs{Q}_N$ is the mutation matrix that has the form
\begin{equation}\label{eq1:3}
    \bs{Q}_N=\begin{bmatrix}
               -N & 1 & 0 & 0 & \ldots & \ldots & 0 \\
               N & -N & 2 & 0 & \ldots & \ldots & 0 \\
               0 & N-1 & -N & 3 & \ldots & \ldots & 0 \\
               0 & 0 & N-2 & -N & \ldots & \ldots & 0 \\
               \ldots & \ldots & \ldots & \ldots & \ldots & \ldots & 0 \\
               0 & 0 & \ldots & \ldots& 2 & -N & N \\
               0 & 0 & \ldots & \ldots & 0& 1 & -N \\
             \end{bmatrix};
\end{equation}
$\mu$ is the mutation rate per site per sequence per time unit; finally,
\begin{equation}\label{eq1:4}
\overline{m}(t)=\bs{m}\cdot \bs{p}(t)=\sum_{i=0}^Nm_ip_i(t)
\end{equation}
is the mean population fitness, $\cdot$ denotes the standard dot product in $\R^{N+1}$.

The form of the mutation matrix \eqref{eq1:3} can be understood as follows: The mutations from class $k$ to class $k+1$ occur with the rate $\mu (N-k)$, since we need to mutate one of the sites among $N-k$ sites with 0s, and the mutations from class $k$ to $k-1$ occur with the rate $k\mu$, since we need to mutate one of $k$ sites with 1s.

Note that equation \eqref{eq1:1} will not change if the fitnesses are scaled as $m_i+\tilde{m},\,i=0,\ldots, N$ for some constant $\tilde{m}$, therefore it is always possible to consider such fitness landscapes that $\min_i\{ m_i\}=0$ (or any other convenient scaling).

The asymptotic behavior of the solutions to \eqref{eq1:1} is determined by the stationary point $\bs{{p}}=\lim_{t\to\infty}\bs{p}(t)$, which is a solution to the system
\begin{equation}\label{eq1:5}
   (\bs{M}+\mu \bs{Q})\bs p={\overline{m}}\,\bs{p},
\end{equation}
where
\begin{equation}\label{eq1:6}
\overline{m}=\bs{m}\cdot \bs{p}.
\end{equation}
By the Perron--Frobenius theorem and standard arguments it follows that system \eqref{eq1:5} has a unique positive solution $\bs{p}\in S_{N+1},\,p_i>0,\,i=0,\ldots,N$, which is the right eigenvector of the matrix $\bs{M}+\mu \bs{Q}$ corresponding to the simple real dominant eigenvalue $\lambda=\overline{m}$. This eigenvector $\bs{p}$ was called by Manfred Eigen the \textit{quasispecies} \cite{eigen1971sma}, hence the name for the theory; the fact that $\lambda$ dominates all other eigenvalues guarantees that $\bs{p}$ is globally stable for system \eqref{eq1:1}.

Both the dominant eigenvalue $\lambda=\overline{m}$ and the stationary solution $\bs{p}$ to \eqref{eq1:5}  depend on the mutation rate $\mu$, provided that the sequence length $N$ and the fitness landscape $\bs{M}$ are fixed. If the mutation rate $\mu$ changes, the stationary solution $\bs{p}$ to \eqref{eq1:5} also changes. In the following we use the notation $\overline{m}(\mu)$ and $\bs p(\mu)$ to emphasize this dependence and stress that with \textit{no} exceptions we deal only with the stationary problem in the rest of the text.  Our task in the present manuscript is to infer analytical insights on the behavior of $\overline{m}(\mu)$ and $\bs{p}(\mu)$ depending on $\mu$ given $\bs M$ and $N$.

\section{Properties of the mutation matrix}
The results of this section, though technical in nature, are central for the following analysis. It turns out that the structure of the matrix $\bs{Q}_N$ allows one to provide a full characterization of its eigenvalues and eigenvectors, which are convenient to use as a different basis for our basic eigenvalue problem \eqref{eq1:5}. Matrix $\bs{Q}_N$ belongs to the family of matrices used in Ehrenfest urn models \cite{karlin1965ehrenfest}, and its eigenvalues and eigenvectors can be expressed in terms of the Krawtchouk system of orthogonal polynomials (see, e.g., \cite{karlin1965ehrenfest}). However, since this representation is not required for our purposes, and to make the paper self-contained, we provide a simple closed expression for the eigenvectors of $\bs{Q}_N$.

The key idea in the proof of the following proposition is to write a linear differential operator $\mathcal Q_N$ acting on the space of polynomials of degree less or equal $N$ whose matrix representation is given by~$\bs Q_N$.

\begin{proposition}\label{lem3:1} For the matrix $\bs{Q}=\bs{Q}_N$ defined by \eqref{eq1:3} the following holds:
\begin{enumerate}
\item The eigenvalues of $\bs{Q}_N$ are simple (all have algebraic multiplicities one) and given by
\begin{equation}\label{eq3:1}
q_k=-2k,\quad k=0,\ldots,N.
\end{equation}

\item Let $\bs{v}_k^\top=(c_{0k},\ldots,c_{Nk})$ be the right eigenvector of $\bs{Q}_N$ corresponding to $q_k$ and normalized such that $c_{0k}=1$, $\bs{C}=\bs{C}_N=(c_{ik})_{(N+1)\times (N+1)}$ be the matrix composed of $\bs{v}_k$ ($\bs{v}_k$ is the $k$-th column of $\bs{C}_N$). Then the generating function for the elements of the $k$-th column has the form
\begin{equation}\label{eq3:2}
    P_k(t)=\sum_{i=0}^N c_{ik}\, t^i=(1-t)^k(1+t)^{N-k},\quad k=0,\ldots,N.
\end{equation}
\item $\bs{C}^2=2^N \bs{I}$, where $\bs I$ is the identity matrix, or, equivalently,
\begin{equation}\label{eq3:3}
    \bs{C}^{-1}=2^{-N}\bs{C}.
\end{equation}
\item 1-norm of $\bs C$ is
\begin{equation}\label{eq3:4}
    \|\bs C\|_1=\max_{0\leq k\leq N}\sum_{i=0}^N|c_{ik}|=2^N.
\end{equation}
\end{enumerate}
\end{proposition}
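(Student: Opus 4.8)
The plan is to realize $\bs{Q}_N$ as the matrix, in the monomial basis $\{1,t,\dots,t^N\}$, of a first order linear differential operator $\mathcal{Q}_N$ on the space $\mathcal{P}_N$ of polynomials of degree at most $N$, and then to read off all four assertions from elementary properties of this operator. Identifying $(a_0,\dots,a_N)^\top\in\R^{N+1}$ with $f(t)=\sum_{i=0}^N a_it^i$, the three term recurrence encoded by \eqref{eq1:3}, namely $(\bs{Q}_N\bs{a})_i=(N-i+1)a_{i-1}-Na_i+(i+1)a_{i+1}$, translates term by term (the three contributions becoming $Ntf-t^2f'$, $-Nf$, and $f'$) into
\[
\mathcal{Q}_N f=(1-t^2)f'-N(1-t)f=(1-t)\bigl[(1+t)f'-Nf\bigr].
\]
Since $\mathcal{Q}_N$ maps $\mathcal{P}_N$ into itself, this is a genuine matrix identity, and the spectral questions about $\bs{Q}_N$ become questions about $\mathcal{Q}_N$.

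For parts 1 and 2 I would simply substitute the candidate $P_k(t)=(1-t)^k(1+t)^{N-k}$ into $\mathcal{Q}_N$. Using $(1-t^2)P_k'=P_k\bigl[(N-k)(1-t)-k(1+t)\bigr]=P_k\,(N-2k-Nt)$ one gets $\mathcal{Q}_N P_k=P_k\bigl[(N-2k-Nt)-N(1-t)\bigr]=-2k\,P_k$. Hence each $P_k$, $k=0,\dots,N$, is an eigenfunction with eigenvalue $-2k$; these $N+1$ eigenvalues are pairwise distinct, so $\bs{Q}_N$ is diagonalizable and every eigenvalue is simple, which is part 1. Since $P_k(0)=1$, the coefficient vector of $P_k$ is precisely the eigenvector $\bs{v}_k$ normalized by $c_{0k}=1$, so its generating function is $P_k$, which is part 2.

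For part 3 I would note that $\bs{C}$ is the matrix of the linear map $T\colon\mathcal{P}_N\to\mathcal{P}_N$ sending $t^k$ to $P_k(t)$; writing $f=\sum_k a_k t^k$ and summing $a_k P_k(t)=a_k(1+t)^N\bigl(\tfrac{1-t}{1+t}\bigr)^k$ yields the closed form $Tf(t)=(1+t)^N f\bigl(\tfrac{1-t}{1+t}\bigr)$. The M\"obius substitution $t\mapsto\tfrac{1-t}{1+t}$ is an involution: since $1+\tfrac{1-t}{1+t}=\tfrac{2}{1+t}$ and $\bigl(1-\tfrac{1-t}{1+t}\bigr)/\bigl(1+\tfrac{1-t}{1+t}\bigr)=t$, one computes $T^2 f(t)=(1+t)^N(Tf)\bigl(\tfrac{1-t}{1+t}\bigr)=(1+t)^N\bigl(\tfrac{2}{1+t}\bigr)^N f(t)=2^N f(t)$. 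Thus $T^2=2^N\bs{I}$ on $\mathcal{P}_N$, i.e. $\bs{C}^2=2^N\bs{I}$ and $\bs{C}^{-1}=2^{-N}\bs{C}$, which is part 3.

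Finally, for part 4, the binomial theorem gives $c_{ik}=\sum_j(-1)^j\binom{k}{j}\binom{N-k}{i-j}$, so by the triangle inequality together with Vandermonde's identity $\sum_j\binom{k}{j}\binom{N-k}{i-j}=\binom{N}{i}$ we obtain $|c_{ik}|\le\binom{N}{i}$ and hence $\sum_{i=0}^N|c_{ik}|\le 2^N$ for every $k$. Equality is attained at $k=0$, where $P_0(t)=(1+t)^N$ has nonnegative coefficients $\binom{N}{i}$ summing to $2^N$; therefore $\|\bs{C}\|_1=2^N$. The only steps that are not pure computation are guessing the generating function $(1-t)^k(1+t)^{N-k}$ in the first place --- which the operator form $(1-t)\bigl[(1+t)\partial_t-N\bigr]$ makes natural, these being its obvious eigenfunctions --- and the involution trick that collapses $T^2$ to $2^N\bs{I}$ without any coefficient bookkeeping; everything else is routine.
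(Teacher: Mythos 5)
Your proposal is correct and follows essentially the same route as the paper: the same realization of $\bs{Q}_N$ as the differential operator $(1-t^2)\partial_t-N(1-t)$ on $\R_N[t]$ with eigenfunctions $(1-t)^k(1+t)^{N-k}$, the same M\"obius substitution $t\mapsto\frac{1-t}{1+t}$ to get $\bs{C}^2=2^N\bs{I}$, and the same Vandermonde estimate $|c_{ik}|\le\binom{N}{i}$ for the $1$-norm. The only cosmetic difference is that you state the involution step as the operator identity $T^2=2^N\bs{I}$ for general $f$, whereas the paper verifies it coefficientwise on the monomials $t^j$.
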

\begin{proof}
Consider the linear operator
$$
\mathcal Q_N\colon P(t)\longrightarrow(1-t^2)P'(t)-N(1-t)P(t),
$$
acting on the $(N+1)$-dimensional vector space of the polynomials $\R_N[t]$ of degree less or equal $N$. Direct calculations show that matrix $\bs Q_N$ is the matrix of $\mathcal Q_N$ in the standard basis $\{1,t,\ldots, t^N\}$ of $\R_N[t]$. Using the fact that for $P_k(t)$ defined by \eqref{eq3:2} we have that
$$
\mathcal Q_N\bigl(P_k(t)\bigr)=(1-t^2)P'_k(t)-N(1-t)P(t)=-2kP_k(t),\quad k=0,\ldots, N
$$
holds, we obtain that $P_k(t)$ can be considered as an eigenvector of $\mathcal Q_N$ corresponding to $q_k=-2k,\,k=0,\ldots,N$. Decomposing $P_k(t)$ through the standard basis, we obtain for some numbers $c_{ik}$ (see below \eqref{eq3:6})
$$
P_k(t)=\sum_{i=0}^Nc_{ik}t^i.
$$
Since $P_k(0)=c_{0k}=1$, which coincides with the condition on the columns of $\bs{C}$, assertions 1 and 2 of Proposition have been proved.

Note that $\sum_{k=0}^N c_{ik}c_{kj}$ is the coefficient at $t^i$ in the polynomial
\begin{align*}
\sum_{k=0}^N P_k(t)c_{kj}&=\sum_{k=0}^N c_{kj}(1-t)^k(1+t)^{N-k}=(1+t)^N\sum_{k=0}^Nc_{kj}\left(\frac{1-t}{1+t}\right)^k\\
&=(1+t)^NP_j\left(\frac{1-t}{1+t}\right)=(1+t)^N\left(1-\frac{1-t}{1+t}\right)^j\left(1+\frac{1-t}{1+t}\right)^{N-j}=2^Nt^{j}.
\end{align*}
This calculation implies that $\sum_{k=0}^Nc_{ik}c_{kj}=2^N$ if $i=j$, otherwise this expression is zero. This proves assertion 3. As a simple corollary note that
\begin{equation}\label{eq3:5}
    \sum_{k=0}^Nc_{k0}=2^N,\quad \sum_{k=0}^Nc_{kj}=0,\quad j\neq 0,
\end{equation}
since for the zeroth row of $\bs{C}$ we have $(c_{00},\ldots,c_{0N})=(1,\ldots,1)$.

Using \eqref{eq3:2} we find that
\begin{equation}\label{eq3:6}
    c_{ik}=\sum_{j=0}^N(-1)^j\binom{k}{j}\binom{N-k}{i-j}.
\end{equation}
Estimating the absolute value yields
$$
|c_{ik}|\leq \sum_{j=0}^N\binom{k}{j}\binom{N-k}{i-j}=\binom{N}{i}=c_{i0},
$$
which implies
$$
 \|\bs C\|_1=\max_{0\leq k\leq N}\sum_{i=0}^N|c_{ik}|=\sum_{i=0}^Nc_{i0}=\sum_{i=0}^{N}\binom{N}{i}=2^N.
$$
This finishes the proof.
\end{proof}

\begin{remark} Formulas \eqref{eq3:6} imply that for any $0\leq i,j\leq N$
\begin{equation}\label{eq3:6new}
    c_{ij}\binom{N}{j}=c_{ji}\binom{N}{i}.
\end{equation}

\end{remark}

\begin{remark}Recall that we consider only permutation invariant fitness landscapes. For the full fitness landscapes (i.e., not permutation invariant) and mutation matrices defined  by \eqref{eq0:7} it is more natural to consider a representation of the mutation matrix with the help of the Kronecker product (see, e.g., \cite{garcia2002linear} for the Crow--Kimura model). For our present purposes however we require the explicit form of the eigenvectors of $\bs Q$ as given by \eqref{eq3:6} and therefore do not use this type of representations.
\end{remark}

Proposition \ref{lem3:1} allows one to rewrite problem \eqref{eq1:5} in the basis composed of the eigenvectors of $\bs{Q}_N$. Let $\bs{D}_N=\diag(0,1,\ldots,N)$, then
\begin{equation}\label{eq3:7}
    \bs{C}^{-1}\bs{Q}_N\bs{C}=-2\bs{D}_N,\quad \bs{C}^{-1}\mu\bs{Q}_N\bs{C}=-2\mu\bs{D}_N.
\end{equation}
Multiplying \eqref{eq1:5} by $\bs{C}^{-1}$ from the left, we get
\begin{equation}\label{eq3:8}
\bs{C}^{-1}\bs{MCC}^{-1}\bs{p}+\mu\bs{C}^{-1}\bs{QCC}^{-1}\bs{p}=\overline{m}\bs{C}^{-1}\bs{p}.
\end{equation}
Introduce the notations
\begin{equation}\label{eq3:9}
\bs{F}(\bs{m})=\bs{C}^{-1}\bs{M}\bs{C}=2^{-N}\bs{CMC},\quad\bs{x}=\bs{C}^{-1}\bs{p}=2^{-N}\bs{Cp}.
\end{equation}
\eqref{eq3:9} and \eqref{eq3:8} together imply
\begin{equation}\label{eq3:10}
    \bs{F}(\bs m)\bs{x}=2\mu\bs{D}_N\bs{x}+\overline{m}\bs{x}.
\end{equation}
Matrix $\bs{F}(\bs m)=\bigl(f_{ik}(\bs m)\bigr)$ consists of linear forms with respect to $\bs{m}$, namely
$$
f_{ik}(\bs m)=2^{-N}\sum_{j=0}^Nm_jc_{ij}c_{jk}.
$$
The mean population fitness in new variables takes the form
\begin{equation}\label{eq3:11}
    \overline{m}=\bs{m}\cdot \bs{p}=\bs{C}^\top\bs{m}\cdot \bs{x}.
\end{equation}
The condition $\bs{p}\in S_{N+1}$ implies that
\begin{equation}\label{eq3:12}
    x_0=2^{-N},
\end{equation}
because the first row of $\bs{C}$ is composed of all ones, and due to \eqref{eq3:9}. In particular, image $\bs{C}^{-1}S_{N+1}$ of the simplex $S_{N+1}$ lies in the hyperplane defined by \eqref{eq3:12}.

Recall that if $\bs{v}^\top=(v_0,\ldots,v_N)\in\R^{N+1}$, then $\|\bs{v}\|_1=\sum_{k=0}^N|v_k|$. Therefore, for any $\bs{p}\in S_{N+1},$ $\|\bs{p}\|_1=1$. For any $\bs{x}\in \bs{C}^{-1}S_{N+1}$ by using Proposition \ref{lem3:1}
\begin{equation}\label{eq3:13}
    \|\bs{x}\|_1=2^{-N}\|\bs{Cp}\|_1\leq 2^{-N}\|\bs{C}\|_1\|\bs{p}\|_1=1.
\end{equation}
Moreover, using \eqref{eq3:9}, \eqref{eq1:5} and \eqref{eq3:13}:
\begin{equation}\label{eq3:14}
\begin{split}
  \|\bs{F}(\bs m)\bs{x}\|_1&=2^{-N}\|\bs{CMp}\|_1 \leq 2^{-N}\|\bs{C}\|_1\|\bs{M}\|_1\|\bs{p}\|_1=\|\bs{M}\|_1=\max_{0\leq k\leq N} \{m_k\}.
\end{split}
\end{equation}
Additionally,
\begin{equation}\label{eq3:15}
\begin{split}
  \|\overline{m}\bs{x}\|_1&=\left|\sum_{k=0}^N m_kp_k\right|\|\bs{x}\|_1\leq \max_{0\leq k\leq N} \{m_k\} \|\bs{p}\|_1=\|\bs{M}\|_1.
\end{split}
\end{equation}

\section{Asymptotic stabilization of $\bs{p}(\mu)$ when $\mu\to\infty$}
In this section we study the behavior of $\bs{p}(\mu)$ for large $\mu$. One important case from the evolutionary viewpoint is when the distribution of the types of sequences becomes uniform. This implies that natural selection ceases to operate in the population.  In terms of the classes of sequences, for which our model is written and which constitute our vector $\bs{p}(\mu)$, this means that the distribution of the classes is binomial (recall that class $k$ has ${N \choose k}$ types of sequences).

First we show that the binomial distribution can be a solution of \eqref{eq1:5} only if the fitness matrix is a scalar matrix, i.e., the identity matrix times some constant.

\begin{proposition}\label{lem4:1} If $\bs{M}$ is different from $m\bs{I}$, where $m$ is a constant, then $\bs{p}'(\mu)\neq 0$ for any $\mu\geq 0$.
\end{proposition}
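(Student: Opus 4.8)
The plan is to contrapose: assume $\bs{p}'(\mu_0) = 0$ for some $\mu_0 \geq 0$ and deduce that $\bs{M}$ must be scalar. Working in the eigenvector coordinates of $\bs{Q}_N$, recall from \eqref{eq3:10} that the stationary problem reads $\bs{F}(\bs m)\bs{x} = 2\mu \bs{D}_N \bs{x} + \overline{m}\,\bs{x}$, with the normalization $x_0 = 2^{-N}$ from \eqref{eq3:12}. Since $\bs{x} = \bs{C}^{-1}\bs{p}$ and $\bs{C}$ is a fixed invertible matrix, $\bs{p}'(\mu_0) = 0$ is equivalent to $\bs{x}'(\mu_0) = 0$; likewise $\overline{m}' (\mu_0) = \bs{m}\cdot \bs{p}'(\mu_0) = 0$. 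So I would differentiate \eqref{eq3:10} in $\mu$ and evaluate at $\mu_0$: all terms involving $\bs{x}'$ and $\overline{m}'$ drop out, leaving the identity
\begin{equation}\label{eq:prop-forced}
\V{0} = 2\,\bs{D}_N\,\bs{x}(\mu_0).
\end{equation}

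Reading off \eqref{eq:prop-forced} componentwise: since $\bs{D}_N = \diag(0,1,\ldots,N)$, the $k$-th component gives $k\, x_k(\mu_0) = 0$, hence $x_k(\mu_0) = 0$ for every $k = 1, \ldots, N$. Together with $x_0(\mu_0) = 2^{-N}$ this pins down $\bs{x}(\mu_0)$ completely: $\bs{x}(\mu_0) = (2^{-N}, 0, \ldots, 0)^\top$. Translating back via $\bs{p} = \bs{C}\bs{x}$, and using that the first column $\bs{v}_0$ of $\bs{C}$ is the vector of binomial coefficients $\bigl(\binom{N}{i}\bigr)_{i=0}^N$ (this is the $k=0$ case of \eqref{eq3:2}, $P_0(t) = (1+t)^N$), we get $p_i(\mu_0) = 2^{-N}\binom{N}{i}$, i.e., the quasispecies is exactly the binomial distribution at $\mu_0$.

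It then remains to show that the binomial vector $\bs{p}^\ast$ with $p^\ast_i = 2^{-N}\binom{N}{i}$ solves \eqref{eq1:5} only when $\bs{M}$ is scalar — which is precisely the assertion flagged just before the proposition ("First we show that the binomial distribution can be a solution of \eqref{eq1:5} only if the fitness matrix is a scalar matrix"). For this, observe $\bs{p}^\ast = 2^{-N}\bs{v}_0 = 2^{-N}\bs{C}\bs{e}_0$, and $\bs{Q}_N \bs{p}^\ast = 2^{-N}\bs{Q}_N\bs{v}_0 = 2^{-N} q_0 \bs{v}_0 = 0$ since $q_0 = -2\cdot 0 = 0$ by \eqref{eq3:1}. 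Hence \eqref{eq1:5} at $\bs{p}^\ast$ collapses to $\bs{M}\bs{p}^\ast = \overline{m}\,\bs{p}^\ast$, i.e., $m_i p^\ast_i = \overline{m}\, p^\ast_i$ for all $i$; since every $p^\ast_i > 0$, this forces $m_i = \overline{m}$ for all $i$, so $\bs{M} = \overline{m}\,\bs{I}$ is scalar. Contrapositively, if $\bs{M}$ is not scalar then $\bs{p}'(\mu) \neq 0$ for every $\mu \geq 0$.

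I do not anticipate a serious obstacle here; the one point requiring a little care is the equivalence $\bs{p}'(\mu_0)=0 \iff \bs{x}'(\mu_0)=0$ and that $\overline m'(\mu_0)$ vanishes — but both are immediate from linearity of $\bs{x} = \bs{C}^{-1}\bs{p}$ and $\overline m = \bs m \cdot \bs p$ in $\bs p$. A minor alternative, avoiding the explicit passage to the binomial vector, is to argue directly from \eqref{eq:prop-forced} that $\bs{D}_N\bs{x}(\mu_0) = 0$ already makes the right-hand side of \eqref{eq3:10} equal to $\overline{m}\,\bs{x}(\mu_0)$, so $\bs{x}(\mu_0)$ is an eigenvector of $\bs{F}(\bs m)$ supported on the first coordinate, and then unwind $\bs{F}(\bs m) = \bs{C}^{-1}\bs{M}\bs{C}$ to conclude $\bs M$ is scalar; but the route through the binomial vector is cleaner and matches the narrative of the text.
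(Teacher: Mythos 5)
Your proposal is correct and follows essentially the same route as the paper: differentiate the stationary eigenvalue equation at $\mu_0$, use $\bs{p}'(\mu_0)=0$ and $\overline{m}'(\mu_0)=0$ to conclude the quasispecies at $\mu_0$ is the binomial vector (your condition $\bs{D}_N\bs{x}(\mu_0)=0$ is exactly the paper's $\bs{Q}\bs{p}(\mu_0)=0$ conjugated by $\bs{C}$), and then observe that the binomial vector can solve \eqref{eq1:5} only if $\bs{M}$ is scalar. The only difference is that you carry out the computation in the eigenbasis coordinates $\bs{x}=\bs{C}^{-1}\bs{p}$, which makes the kernel identification componentwise rather than via Proposition \ref{lem3:1}; this is a cosmetic, not substantive, variation.
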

\begin{proof}
Assume that $\bs{M}\neq m\bs{I}$ and for some $\mu_0\geq 0$ we have that $\bs{p}'(\mu_0)=0$. By differentiating \eqref{eq1:5}, we get
\begin{equation}\label{eq4:1}
    \bs{M}\bs{p}'+\bs{Qp}+\mu \bs{Qp}'=\overline{m}'\bs{p}+\overline{m}\bs{p}'.
\end{equation}
Putting $\mu=\mu_0$ into \eqref{eq4:1} and using $\overline{m}'(\mu_0)=\sum_{k=0}^N m_kp_k'(\mu_0)=0$ imply
\begin{equation}\label{eq4:2}
\bs{Q}\bs{p}(\mu_0)=0,
\end{equation}
i.e., $\bs{p}(\mu_0)$ is an eigenvector of $\bs{Q}$ corresponding to the zero eigenvalue. Due to Proposition \ref{lem3:1} (see \eqref{eq3:6}) this eigenvector is given by
$$
\bs{p}(\mu_0)=2^{-N}\left(\binom{N}{0},\binom{N}{1},\ldots,\binom{N}{N}\right)^\top,
$$
i.e., this is the vector of the binomial distribution with $p=1/2$.

Plugging \eqref{eq4:2} into \eqref{eq1:5} yields $\bs{M}\bs{p}(\mu_0)=\overline{m}\bs{p}(\mu_0),$ or, in coordinates
$$
m_k\binom{N}{k}=\overline{m}\binom{N}{k},\quad k=0,\ldots,N,
$$
which implies $m_0=\ldots=m_N=\overline{m}$, which contradicts the initial assumption.
\end{proof}

Proposition \ref{lem4:1} shows that for any non scalar matrix $\bs{M}$ the vector $\bs{p}$ is not binomial.

However, if $\mu\to\infty$ then $\bs{p}(\mu)$ approaches the binomial distribution, which can be easily shown. To illustrate the utility of the change of the basis from the previous section, we provide here estimates for the speed of convergence, given in terms of the parameters of the model. To formulate our result, we shall use the following definition.
\begin{definition}\label{def4:1} We shall say that the solution $\bs{x}(\mu)=\bs{C}^{-1}\bs{p}(\mu)$ to the problem \eqref{eq3:10} admits asymptotic stabilization if there exists a vector $\bs{\hat{x}}=\bs{C}^{-1}\bs{\hat{p}}\in \bs{C}^{-1}S_{N+1}$ such that
\begin{equation}\label{eq4:3}
    \lim_{\mu\to\infty}\bs{x}(\mu)=\bs{\hat{x}},\quad \lim_{\mu\to\infty}\bs{x}'(\mu)=0.
\end{equation}
\end{definition}
Definition \ref{def4:1} is equivalent to the following: for any $\varepsilon>0$ there exists $\mu_{\varepsilon}$ such that for any $\mu>\mu_{\varepsilon}$ the following inequalities hold:
\begin{equation}\label{eq4:4}
    \|\bs{x}(\mu)-\bs{\hat{x}}\|_1<\varepsilon,\quad \|\bs{x}'(\mu)\|_1<\varepsilon.
\end{equation}
Definition \ref{def4:1} can be also given in terms on the solution to \eqref{eq1:5}: vector $\bs{p}(\mu)$ admits the asymptotic stabilization if
\begin{equation}\label{eq4:5}
    \lim_{\mu\to\infty}\bs{p}(\mu)=\bs{\hat{p}},\quad \lim_{\mu\to\infty}\bs{p}'(\mu)=0.
\end{equation}
Moreover, in this case, using \eqref{eq1:4},
\begin{equation}\label{eq4:6}
    \lim_{\mu\to\infty}\overline{m}(\mu)=\hat{\overline{{m}}},\quad \lim_{\mu\to\infty}\overline{m}'(\mu)=0.
\end{equation}

\begin{theorem}\label{th4:1}
For any matrix $\bs{M}=\diag(m_0,\ldots,m_N)$ the solution $\bs x$ to \eqref{eq3:10} satisfies
\begin{equation*}
    \|\bs{x}-\bs{\hat{x}}\|_1\leq \frac{1}{\mu}\|\bs{M}\|_1,\quad
    \|\bs{x}'\|_1 \leq \frac{2N}{2\mu-(2^{N+1}+1)\|\bs{M}\|_1}\,,
\end{equation*}
where
\begin{equation}\label{eq4:7}
    \bs{\hat{x}}^\top=2^{-N}(1,0,\ldots,0).
\end{equation}
This implies that $\bs x$ admits the asymptotic stabilization.
\end{theorem}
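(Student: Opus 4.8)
The plan is to read the estimates straight off \eqref{eq3:10}, which in the coordinates $\bs x$ is diagonal on the right‑hand side: its $k$-th scalar equation is $(\bs F(\bs m)\bs x)_k=(2\mu k+\overline m)\,x_k$, while $x_0$ is pinned to $2^{-N}$ by \eqref{eq3:12}. So for $k\ge 1$ I would solve $x_k=(\bs F(\bs m)\bs x)_k/(2\mu k+\overline m)$; the crucial observation is that the numerators are bounded \emph{uniformly in $\mu$} by \eqref{eq3:14}, whereas the denominators grow linearly in $\mu$. Taking (without loss of generality, by the shift‑invariance remarked after \eqref{eq1:4}) $\min_i m_i=0$, so that $\overline m=\bs m\cdot\bs p\ge 0$ and $2\mu k+\overline m\ge 2\mu$ for $k\ge 1$, and recalling $x_0=\hat x_0=2^{-N}$, summation of $|x_k|\le|(\bs F(\bs m)\bs x)_k|/(2\mu)$ over $k\ge1$ together with $\|\bs F(\bs m)\bs x\|_1\le\|\bs M\|_1$ gives
\[
\|\bs x-\hat{\bs x}\|_1=\sum_{k=1}^{N}|x_k|\le\frac{\|\bs M\|_1}{2\mu}\le\frac{\|\bs M\|_1}{\mu}.
\]
I would also check that $\hat{\bs x}=2^{-N}(1,0,\dots,0)^\top$ really lies in $\bs C^{-1}S_{N+1}$: by \eqref{eq3:6} the vector $\bs C\hat{\bs x}$ is $2^{-N}$ times the zeroth column $(\binom N0,\dots,\binom NN)^\top$ of $\bs C$, i.e. the binomial ($p=\tfrac12$) distribution, which belongs to $S_{N+1}$.

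For the derivative bound I would differentiate \eqref{eq3:10} in $\mu$; this is legitimate because $\bs p(\mu)$, being the Perron eigenvector attached to the simple dominant eigenvalue of $\bs M+\mu\bs Q$, depends smoothly on $\mu$ — as was already used in the proof of Proposition \ref{lem4:1} — and hence so do $\bs x=\bs C^{-1}\bs p$ and $\overline m$. Since $x_0\equiv 2^{-N}$ we have $x_0'=0$, and the $k$-th differentiated equation ($k\ge1$) reads $(2\mu k+\overline m)\,x_k'=(\bs F(\bs m)\bs x')_k-2k\,x_k-\overline m'\,x_k$. Using $|2\mu k+\overline m|\ge 2\mu-\|\bs M\|_1$ (valid since $|\overline m|=|\bs m\cdot\bs p|\le\|\bs M\|_1$ and $k\ge1$), $2k\le 2N$, and $\sum_{k\ge1}|x_k|=\|\bs x-\hat{\bs x}\|_1\le1$ by \eqref{eq3:13}, summation over $k\ge1$ yields
\[
\bigl(2\mu-\|\bs M\|_1\bigr)\|\bs x'\|_1\le\|\bs F(\bs m)\bs x'\|_1+2N+|\overline m'|.
\]
It then remains to bound the two $\mu$-dependent terms on the right by $\|\bs x'\|_1$: writing $\bs F(\bs m)\bs x'=2^{-N}\bs C\bs M\bs p'$ with $\bs p'=\bs C\bs x'$ and invoking Proposition \ref{lem3:1}(4) gives $\|\bs F(\bs m)\bs x'\|_1\le\|\bs M\|_1\|\bs p'\|_1\le 2^N\|\bs M\|_1\|\bs x'\|_1$, and likewise $|\overline m'|=|\bs m\cdot\bs p'|\le\|\bs M\|_1\|\bs p'\|_1\le 2^N\|\bs M\|_1\|\bs x'\|_1$. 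Substituting and collecting the $\|\bs x'\|_1$-terms produces $\bigl(2\mu-(2^{N+1}+1)\|\bs M\|_1\bigr)\|\bs x'\|_1\le 2N$, i.e. the second inequality, understood for $\mu$ large enough that the coefficient is positive. Both right‑hand sides tend to $0$ as $\mu\to\infty$, hence $\bs x(\mu)\to\hat{\bs x}$ and $\bs x'(\mu)\to 0$, which is precisely the asymptotic stabilization of Definition \ref{def4:1}.

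The only genuine subtlety is this last self‑referential step: $\|\bs x'\|_1$ occurs on both sides of the inequality, and the argument closes only by absorbing the coefficient $2^{N+1}\|\bs M\|_1$ into $2\mu$, which is what forces the restriction $2\mu>(2^{N+1}+1)\|\bs M\|_1$. Everything else is routine $1$-norm bookkeeping built on Proposition \ref{lem3:1} and the a priori bounds \eqref{eq3:13}--\eqref{eq3:15} already established.
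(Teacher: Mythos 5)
Your proposal is correct and follows essentially the same route as the paper: work in the $\bs x$-coordinates where the mutation term is the diagonal $2\mu\bs D_N$, use $x_0\equiv 2^{-N}$ together with the a priori bounds \eqref{eq3:13}--\eqref{eq3:15} and $\|\bs C\|_1=2^N$ to control the remaining terms uniformly, and then differentiate and absorb the $\|\bs x'\|_1$-terms to get the restriction $2\mu>(2^{N+1}+1)\|\bs M\|_1$ — exactly the paper's argument, with the same constants. The only (harmless) cosmetic differences are that you argue componentwise by dividing by $2\mu k+\overline m$ (with the WLOG shift $\min_i m_i=0$) instead of lower-bounding $\|2\mu\bs D_N(\bs x-\hat{\bs x})\|_1\geq 2\mu\|\bs x-\hat{\bs x}\|_1$, which even yields a slightly sharper first estimate before you relax it to the stated one.
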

\begin{proof} Since $\bs{\hat{x}}$ is an eigenvector of $\bs{D}_N$ corresponding to the zero eigenvalue, then $\bs{D}_N\bs{\hat{x}}=0$; therefore \eqref{eq3:10} yields
\begin{equation}\label{eq4:8}
    \bs{F}(\bs m)\bs{x}-\overline{m}\,\bs{x}=2\mu\bs{D}_N(\bs{x}-\bs{\hat{x}}).
\end{equation}
Taking into account \eqref{eq3:14} and \eqref{eq3:15}, the left hand side of \eqref{eq4:8} can be estimated
\begin{equation}\label{eq4:9}
    \| \bs{F}(\bs m)\bs{x}-\overline{m}\,\bs{x}\|_1\leq 2\|\bs{M}\|_1=2\max_{0\leq k\leq N} m_k.
\end{equation}
On the other hand
\begin{equation}\label{eq4:10}
    \|2\mu\bs{D}_N(\bs{x}-\bs{\hat{x}})\|_1=2\mu\sum_{k=1}^Nk|x_k-\hat{x}_k|\geq 2\mu \|\bs{x}-\bs{\hat{x}}\|_1.
\end{equation}
From \eqref{eq4:8}--\eqref{eq4:10}, assuming $\mu>0$, we find $2\mu \|\bs{x}-\bs{\hat{x}}\|_1\leq 2\|\bs{M}\|_1$, i.e.,
\begin{equation}\label{eq4:11}
    \|\bs{x}-\bs{\hat{x}}\|_1\leq \frac{1}{\mu}\|\bs{M}\|_1,
\end{equation}
from which it follows that $\bs{x}(\mu)\to\bs{\hat{x}}$ when $\mu\to\infty$.

Differentiating \eqref{eq3:10} with respect to $\mu$, we obtain
\begin{equation}\label{eq4:12}
    \bs{F}(\bs m)\bs{x}'-\overline{m}'\bs{x}-\overline{m}\,\bs{x}'-2\bs{D}_N\bs{x}=2\mu\bs{D}_N\bs{x}'.
\end{equation}
The left hand side can be bounded as
\begin{equation}\label{eq4:13}
    \|\bs{F}(\bs m)\bs{x}'-\overline{m}'\bs{x}-\overline{m}\,\bs{x}'-2\bs{D}_N\bs{x}\|_1\leq (2^{N+1}+1)\|\bs{M}\|_1\|\bs{x}'\|_1+2N.
\end{equation}
The estimate \eqref{eq4:13} follows from the expression \eqref{eq3:9} and estimates \eqref{eq3:14}, \eqref{eq3:15}:
\begin{align*}
\|\bs{F}(\bs{m})\bs{x}'\|_1&\leq \|\bs{F}(\bs{m})\|_1\|\bs{x}'\|_1=2^{-N}\|\bs{CMC}\|_1 \|\bs{x}'\|_1
\leq 2^{N}\|\bs{M}\|_1\|\bs{x}'\|_1;\\
\|\overline{m}'\bs{x}\|_1&\leq |\overline{m}'|=\left|\sum_{k=0}^Nm_kp_k\right|\leq \|\bs{M}\|_1\|\bs{p}'\|_1=\|\bs{M}\|_1\|\bs{Cx}'\|_1
\leq 2^N \|\bs{M}\|_1\|\bs{x}'\|_1;\\
\|\overline{m}\bs{x}'\|_1&\leq |\overline{m}|\|\bs{x}'\|_1=\left|\sum_{k=0}^N m_kp_k\right|\|\bs{x}'\|_1
\leq \|\bs{M}\|_1\|\bs{x}'\|_1;\\
\|2\bs{D}_N\bs{x}\|_1&\leq 2 \|\bs{D}_N\|_1\|\bs{x}\|_1\leq 2N.
\end{align*}
For the right hand side of \eqref{eq4:12}, utilizing $x_0\equiv 2^{-N}$ and $x_0'\equiv 0$, we obtain a lower bound as
\begin{equation}\label{eq4:14}
    \|2\mu\bs{D}_N\bs{x}'\|_1=2\mu\sum_{k=1}^N k|x_k'|\geq 2\mu\sum_{k=1}^N|x_k'|=2\mu \|\bs{x}'\|_1.
\end{equation}
Putting together \eqref{eq4:12}--\eqref{eq4:14}, we find $2\mu \|\bs{x}'\|_1\leq (2^{N+1}+1)\|\bs{M}\|_1\|\bs{x}'\|_1+2N$. Therefore, for large enough $\mu$
\begin{equation}\label{eq4:15}
    \|\bs{x}'\|_1 \leq \frac{2N}{2\mu-(2^{N+1}+1)\|\bs{M}\|_1}\,,
\end{equation}
which implies \eqref{eq4:3}.
\end{proof}
\begin{remark}
Returning back to vector $\bs{p}$, we find that
$$
\bs{p}(\mu)\to\bs{\hat{p}}=\bs{C\hat{x}}=2^{-N}\left(\binom{N}{0},\binom{N}{1},\ldots,\binom{N}{i},\ldots,\binom{N}{N}\right)^\top,
$$
and
$$
\overline{m}(\mu)\to\hat{\overline{m}}=2^{-N}\sum_{k=0}^N\binom{N}{k}m_k,\quad \overline{m}'(\mu)\to 0.
$$

\end{remark}

\section{Parametric solution to equation \eqref{eq1:5}}\label{sec5} Without loss of generality matrix $\bs{M}$ can be taken such that $\bs{M}\neq m\bs{I}$ (in this case the solution is given by the vector of binomial distribution, see Proposition \ref{lem4:1}), and also $\min_{k} \{m_k\}=0$. The original equation $\bs{Mp}+\mu\bs{Q}_N\bs{p}=\overline{m}\bs{p}$ can be rewritten as
\begin{equation}\label{eq5:1}
    \bs{Mp}=\overline{m}\bs{p}-\mu \bs{Q}_N\bs{p}.
\end{equation}
As before, we use the substitution $\bs{p}=\bs{Cx}$:
\begin{equation}\label{eq5:2}
    \bs{Mp}=\bs{C}\bigl(\overline{m}\bs{I}+2\mu\diag(0,1,\ldots,N)\bigr)\bs{x},
\end{equation}
or, using the parameter $s=2\mu/\overline{m}$,
\begin{equation}\label{eq5:3}
    \bs{Mp}=\overline{m}\bs{C}\bigl(\bs{I}+\diag(0,s,2s,\ldots,Ns)\bigr)\bs{x},\quad \bs{p}=\bs{Cx}.
\end{equation}
In coordinates, we have
\begin{equation}\label{eq5:4}
  m_ip_i =\overline{m}\sum_{k=0}^N c_{ik}(1+ks)x_k,\quad
    p_i=\sum_{k=0}^N c_{ik}x_k,\quad i=0,\ldots,N.
\end{equation}
We are looking for a parametric solution $\bs{p}=\bs{p}(s),\,\bs{x}=\bs{x}(s)$. Introduce the following notations:
\begin{equation}\label{eq5:5}
    F(s)=\overline{m}(s),\quad \mu=\frac{1}{2}sF(s).
\end{equation}
We can try to express the vector $\bs{x}$ through the function $F(s)$, which is actually the dominant eigenvalue, and through $\bs{x}$ we can find $\bs{p}=\bs{Cx}$.

Recall that for any $s$
\begin{equation}\label{eq5:6}
    \sum_{k=0}^Np_k(s)=1,\quad x_0(s)=2^{-N}.
\end{equation}

In the following we deal only with the simplest case when all the fitnesses except one are zero. For the general case we refer to Appendix \ref{append:B}.

Let $m_j>0$ for some $j$ and all other $m_i=0,i=0,\ldots,N,i\neq j$. Then $F(s)=m_jp_j(s)>0$ at least for small $s$ since $F(s)\to m_j$ as $s\to 0$. System \eqref{eq5:4} takes the form
\begin{equation}\label{eq5:7}
\begin{split}
  m_jp_j &=\overline{m}\sum_{k=0}^N c_{jk}(1+ks)x_k,\\
  0&=\overline{m}\sum_{k=0}^N c_{ik}(1+ks)x_k,\quad i=0,\ldots,N,\,i\neq j,\\
    p_i& =\sum_{k=0}^N c_{ik}x_k,\quad i=0,\ldots,N.
\end{split}
\end{equation}
After dividing by $\overline{m}$, we have
\begin{equation}\label{eq5:8}
\begin{split}
  1 &=\sum_{k=0}^N c_{jk}(1+ks)x_k,\\
  0&=\sum_{k=0}^N c_{ik}(1+ks)x_k,\quad i=0,\ldots,N,\,i\neq j,\\
    p_i& =\sum_{k=0}^N c_{ik}x_k,\quad i=0,\ldots,N.
\end{split}
\end{equation}
In the matrix form system \eqref{eq5:8} is
\begin{equation}\label{eq5:9}
    \bs{e}_j=\bs{C}\diag(1,1+s,1+2s,\ldots,1+Ns)\bs x,\quad \bs{p}=\bs{Cx},
\end{equation}
where $\bs{e}_j$ is the $j$-th standard unit vector. From \eqref{eq5:8} and \eqref{eq3:3}:
$$
\diag(1,1+s,\ldots,1+Ns)\bs{x}=2^{-N}\bs{Ce}_j,
$$
or, in coordinates,
\begin{equation}\label{eq5:10}
    x_k(s)=2^{-N}\frac{c_{kj}}{1+ks}\,,\quad p_i(s)=2^{-N}\sum_{k=0}^N\frac{c_{ik}c_{kj}}{1+ks}\,,\quad \overline{m}(s)=m_jp_j(s),\quad \mu=\frac{1}{2}s\overline{m}(s).
\end{equation}
If one uses the notation
\begin{equation}\label{eq5:11}
    F_{ij}(s)=2^{-N}\sum_{k=0}^N\frac{c_{ik}c_{kj}}{1+ks}\,,
\end{equation}
then the solution \eqref{eq5:10} can be represented in the following compact form:
\begin{equation}\label{eq5:12}
    p_i(s)=F_{ij}(s),\quad F(s)=m_jF_{jj}(s),\quad \mu=\frac{s}{2}F(s),\quad \overline{m}=F(s).
\end{equation}
\begin{remark} Note that if $j=0$, i.e., the fittest type corresponds to the zeroth index, the solutions \eqref{eq5:10} or \eqref{eq5:12} can be simplified further. We use here that $c_{i0}=\binom{N}{i}$ and $c_{0k}=1$. This implies that
\begin{equation}\label{eq5:13}
\begin{split}
    x_k(s)&=2^{-N}\frac{\binom{N}{k}}{1+ks}\,,\quad p_i(s)=2^{-N}\sum_{k=0}^N\frac{c_{ik}\binom{N}{k}}{1+ks}\,,\\
     \overline{m}(s)&=m_0p_0(s)=\frac{m_0}{2^{N}}\sum_{k=0}^N\frac{\binom{N}{k}}{1+ks}\,,\quad \mu=\frac{1}{2}s\overline{m}(s).
\end{split}
\end{equation}
\end{remark}
A significant number of results about the quasispecies theory is formulated in terms of some limit procedures, when $N\to\infty$. A care should be exercised in this case, since different scalings are possible for the fitness landscape $\bs M$ and the mutation rates. The exact parametric solution \eqref{eq5:10} obtained in this section (see also Appendix \ref{append:B} for a general approach) can be profitably used to obtain such asymptotic expressions, see the next section for two examples.
\section{Two examples of particular fitness landscapes}
In this section we give two specific examples of the applications of the obtained parametric solutions \eqref{eq5:10}. First, we show that our exact methods work in a specific case, which is not covered by the existing in the literature approaches. Second, we obtain an exact solution, for both the leading eigenvalue and the corresponding eigenvector, in the case of the single peaked landscape for $N\to\infty$ and also provide estimates of the speed of convergence. An ``exact solution of the quasispecies model'' exits in the literature \cite{galluccio1997exact}, see also \cite{saakian2004ese} for an ad hoc approach to the same problem, however, we present a mathematically rigorous derivation of the limit quasispecies distribution (see also the discussion in Section \ref{example:2}).

\subsection{Example 1: $\bs m=(0,\ldots,0,N,0,\ldots,0)$}\label{example:1}
To show how the parametric solution from Section \ref{sec5} works, consider the following example. Let the sequence length be even, i.e., $N=2A$ for some integer $A$, and let the fitness landscape be defined as
\begin{equation}\label{eqAA:3}
\bs m=(0,\ldots,0,N,0,\ldots,0),
\end{equation}
where $N$ is exactly at the $A$-th place. Together with $\bs m$ consider also scaled fitness landscape $N\bs r=\bs m$, and the corresponding mean fitnesses $\overline{m}(\mu)$ and $\overline{r}(\mu)=\overline{m}(\mu)/N$. From \eqref{eq5:10} we have
$$
\overline{r}(s)=\frac{1}{2^{2A}}\sum_{i=0}^{2A}\frac{c_{Ak}c_{kA}}{1+ks}\,,\quad \mu(s)=As\overline r(s).
$$
Using \eqref{eq3:2} and \eqref{eq3:6new}, we find
$$
c_{Ak}c_{kA}=\frac{\binom{2A}{A}}{\binom{2A}{k}}c_{kA}^2=\begin{cases}
0,&k=2l+1,\\
\binom{2l}{l}\binom{2(A-l)}{A-l},&k=2l.
\end{cases}
$$
Then \eqref{eq5:10} takes the form
\begin{equation}\label{eqAA:1}
\overline{r}(s)=\frac{1}{2^{2A}}\sum_{l=0}^{A}\frac{\binom{2l}{l}\binom{2(A-l)}{A-l}}{1+2ls}\,,\quad \mu(s)=As\overline r(s).
\end{equation}
Using the fact that
$$
\frac{1}{2^{2n}}\binom{2n}{n}\approx \frac{1}{\sqrt{\pi n}}\,,
$$
we find that in \eqref{eqAA:1} for $0<l<A$
$$
\frac{1}{2^{2A}}\binom{2l}{l}\binom{2(A-l)}{A-l}=\frac{1}{2^{2l}}\binom{2l}{l}\frac{1}{2^{2(A-l)}}\binom{2(A-l)}{A-l}\approx \frac{1}{\pi\sqrt{l(A-l)}}\,.
$$
Now fix $\mu$ and assume that $\overline{r}\to\overline{r}_\infty$ for $A\to\infty$. Then for $A\gg 1$
$$
s=\frac{\mu}{A\overline{r}}\approx \frac{\mu}{A\overline{r}_\infty}\,.
$$
For \eqref{eqAA:1} we find
$$
\overline{r}_\infty\approx \frac{1}{2^{2A}}\binom{2A}{A}+\frac{1}{\pi}\sum_{l=1}^{A-1}\frac{1}{\sqrt{l(A-l)}(1+\frac{2l\mu}{A\overline{r}_\infty})}+\frac{1}{2^{2A}}\binom{2A}{A}\frac{1}{1+\frac{2\mu}{\overline{r}_\infty}}\,.
$$
The first and the last terms tend to zero as $A\to\infty$, and the middle term is
$$
\lim_{A\to\infty}\sum_{l=1}^{A-1}\frac{1}{\sqrt{l(A-l)}(1+\frac{2l\mu}{A\overline{r}_\infty})}=\int_{0}^1\frac{dx}{\sqrt{x(1-x)}(1+\frac{2x\mu}{\overline{r}_\infty})}\,,
$$
where the last intergal can be evaluated exactly. Finally, we find that $\overline{r}_\infty$ is determined from
$$
\overline{r}_\infty=\frac{1}{\sqrt{1+\frac{2\mu}{\overline{r}_\infty}}}\,,
$$
which yields
\begin{proposition}\label{propAA:1} For the Crow--Kimura quasispecies model with the fitness landscape \eqref{eqAA:3}
\begin{equation}\label{eqAA:2}
    \overline{r}_\infty(\mu)=\lim_{A\to\infty} \overline r(\mu)=\sqrt{\mu^2+1}-\mu.
\end{equation}
\end{proposition}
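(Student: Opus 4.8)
The plan is to upgrade the heuristic computation preceding the statement into a genuine argument built on the parametric solution \eqref{eqAA:1}. First I would record the exact form of the relevant coefficients: combining \eqref{eq3:2} with the symmetry \eqref{eq3:6new} gives $c_{Ak}c_{kA}=0$ when $k$ is odd and $c_{A,2l}\,c_{2l,A}=\binom{2l}{l}\binom{2(A-l)}{A-l}$ when $k=2l$, so that $\overline{r}(s)=2^{-2A}\sum_{l=0}^{A}\binom{2l}{l}\binom{2(A-l)}{A-l}(1+2ls)^{-1}$ with the side relation $\mu=As\,\overline{r}(s)$. Comparing the coefficient of $t^{A}$ in $\bigl(\sum_{l\ge 0}\binom{2l}{l}t^{l}\bigr)^{2}=(1-4t)^{-1}$ yields $\sum_{l=0}^{A}\binom{2l}{l}\binom{2(A-l)}{A-l}=4^{A}$, hence $\overline{r}(0)=1$, the uniform bound $0<\overline{r}(s)\le 1$, and strict monotonicity of $s\mapsto\overline{r}(s)$. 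Consequently, for each fixed $\mu\ge 0$ the equation $\mu=As\,\overline{r}(s)$ has a unique solution $s=s_{A}(\mu)$, the value $\overline{r}_{A}(\mu)=\overline{r}(s_{A}(\mu))$ lies in $(0,1]$, and by compactness it suffices to prove that every subsequential limit $\overline{r}_{\infty}$ of $(\overline{r}_{A}(\mu))_{A}$ is strictly positive and satisfies $\overline{r}_{\infty}=(1+2\mu/\overline{r}_{\infty})^{-1/2}$; the only root of this equation in $(0,1]$ is $\sqrt{\mu^{2}+1}-\mu$.

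The core of the proof is the passage to the limit inside the sum. Fix a subsequence along which $\overline{r}_{A}\to\overline{r}_{\infty}$ and set $\beta_{A}=2\mu/\overline{r}_{A}$; since $As_{A}\overline{r}_{A}=\mu$ we have $2ls_{A}=\beta_{A}(l/A)$. Using the two-sided Stirling estimate $2^{-2n}\binom{2n}{n}=(\pi n)^{-1/2}\bigl(1+O(1/n)\bigr)$, the $l=0$ and $l=A$ terms of \eqref{eqAA:1} contribute $O(A^{-1/2})$, while for $1\le l\le A-1$ the $l$-th term equals $\pi^{-1}\bigl(l(A-l)\bigr)^{-1/2}\bigl(1+\beta_{A}l/A\bigr)^{-1}$ times a factor $1+O\bigl(1/\min(l,A-l)\bigr)$; hence $\overline{r}_{A}=\tfrac1A\sum_{l=1}^{A-1}\varphi_{\beta_{A}}(l/A)+o(1)$ with $\varphi_{\beta}(x)=\pi^{-1}\bigl(x(1-x)\bigr)^{-1/2}(1+\beta x)^{-1}$. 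The main obstacle, and the step that needs real care, is that $\varphi_{\beta}$ has integrable singularities at $x=0$ and $x=1$ and that the parameter $\beta_{A}$ is itself moving, so the naive Riemann-sum limit is not automatic. I would handle this in three steps: (i) a positive uniform lower bound $\overline{r}_{A}(\mu)\ge c(\mu)>0$, obtained by discarding all but the indices $0\le l\le\varepsilon A$ in \eqref{eqAA:1}, using $2^{-2A}\sum_{l\le\varepsilon A}\binom{2l}{l}\binom{2(A-l)}{A-l}=\pi^{-1}\bigl(2\arcsin\sqrt\varepsilon+o(1)\bigr)$ and $2\varepsilon As_{A}=\varepsilon\beta_{A}$ to get $\overline{r}_{A}+2\varepsilon\mu\ge \pi^{-1}(2\arcsin\sqrt\varepsilon)+o(1)$ and then optimizing over $\varepsilon$ — this uses only a partial sum, so there is no circularity, and it pins $\beta_{A}$ to a fixed compact set; (ii) peeling off the first and last $\lfloor\sqrt A\rfloor$ indices, whose total contribution is $O(A^{-1/4})$ because $\sum_{l\le\sqrt A}l^{-1/2}=O(A^{1/4})$; (iii) applying uniform convergence of Riemann sums of $\varphi_{\beta}$ on the remaining compact $x$-range, together with continuity of $\beta\mapsto\int_{0}^{1}\varphi_{\beta}$, to replace $\beta_{A}$ by $\beta_{\infty}=2\mu/\overline{r}_{\infty}$. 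This yields $\overline{r}_{\infty}=\int_{0}^{1}\varphi_{\beta_{\infty}}(x)\,dx$.

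Finally I would evaluate the integral and solve the resulting equation. The substitution $x=\sin^{2}\theta$ turns $\int_{0}^{1}\bigl(x(1-x)\bigr)^{-1/2}(1+\beta x)^{-1}\,dx$ into $2\int_{0}^{\pi/2}(1+\beta\sin^{2}\theta)^{-1}\,d\theta$, which by the standard evaluation $\int_{0}^{\pi/2}(a+b\sin^{2}\theta)^{-1}\,d\theta=\tfrac{\pi}{2}\bigl(a(a+b)\bigr)^{-1/2}$ equals $\pi(1+\beta)^{-1/2}$. Hence $\overline{r}_{\infty}=\pi^{-1}\cdot\pi\,(1+2\mu/\overline{r}_{\infty})^{-1/2}=(1+2\mu/\overline{r}_{\infty})^{-1/2}$; squaring and clearing the denominator (legitimate since $\overline{r}_{\infty}>0$ by step (i)) gives $\overline{r}_{\infty}^{2}+2\mu\,\overline{r}_{\infty}-1=0$, whose positive root is $\overline{r}_{\infty}=\sqrt{\mu^{2}+1}-\mu$. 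Since this value is independent of the chosen subsequence, $\lim_{A\to\infty}\overline{r}(\mu)$ exists and equals $\sqrt{\mu^{2}+1}-\mu$, as claimed.
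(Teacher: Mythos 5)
Your proposal is correct and follows essentially the same route as the paper's derivation: the same reduction of $c_{Ak}c_{kA}$ to products of central binomial coefficients, the same Stirling/arcsine asymptotics turning \eqref{eqAA:1} into the integral $\pi^{-1}\int_0^1\bigl(x(1-x)\bigr)^{-1/2}(1+\beta x)^{-1}dx=(1+\beta)^{-1/2}$, and the same self-consistency equation $\overline{r}_\infty=(1+2\mu/\overline{r}_\infty)^{-1/2}$ solved to get $\sqrt{\mu^2+1}-\mu$. The only difference is that you make rigorous what the paper treats heuristically (it simply assumes $\overline{r}\to\overline{r}_\infty$ and passes formally to the Riemann integral), by working with subsequential limits, establishing the uniform positive lower bound via the partial arcsine sum to keep $\beta_A$ in a compact set, and controlling the singular endpoints of the Riemann sum.
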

A comparison of the obtained formula with the numerical computations is given in Fig. \ref{figA:1}.
\begin{figure}[!th]
\centering
\includegraphics[width=\textwidth]{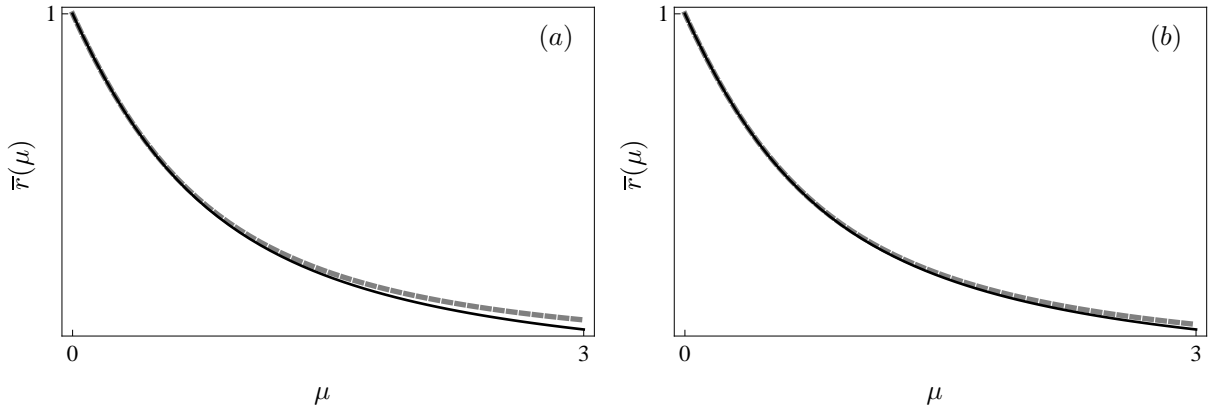}
\caption{Comparison of the limit \eqref{eqAA:2} (black) with the numerical computations (gray, dashed) for the fitness landscape \eqref{eqAA:3}. $(a)$ The sequence length is $N=100$, $(b)$ $N=200$}\label{figA:1}
\end{figure}
\begin{remark}In addition to the exact result \eqref{eqAA:2}, we have checked numerically that for $N=2A\gg 1$ and $k\in \N$, we have
\begin{equation}\label{eqAA:12}
p_A(\mu)\approx \overline{r}_\infty=\sqrt{\mu^2+1}-\mu,\quad p_{A\pm k}\approx \overline{r}_\infty \left(\frac{1-\overline{r}_\infty }{1+\overline{r}_\infty }\right)^k,
\end{equation}
that is, presumably, the limit distribution is two sided geometric (see Fig. \ref{figA:2}) (the cases $p_A,\,p_{A\pm 1}$ can be straightforwardly proved by using the parametric solution \eqref{eq5:10}).
\begin{figure}[!th]
\centering
\includegraphics[width=\textwidth]{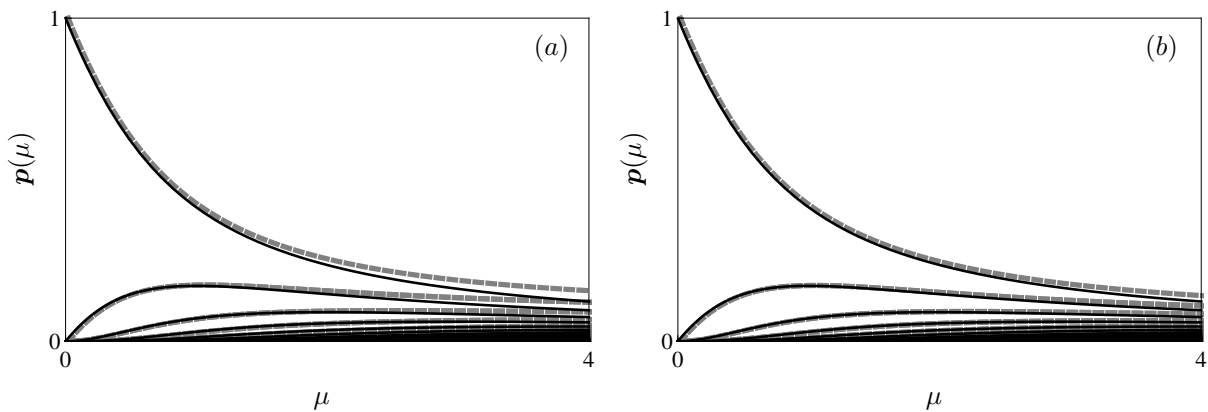}
\caption{Comparison of the limit distribution \eqref{eqAA:12} (black) with the numerical computations (gray, dashed) for $N=100$ $(a)$ and $N=200$ $(b)$}\label{figA:2}
\end{figure}
\end{remark}

In \cite{Baake2007,Hermisson2002} a maximum principle for the quasispecies model was formulated, which, using the notations of the present text, can be stated as follows (we formulate it in the form convenient for comparison with our results and note that more general situations are also treated in the cited papers). Assume that $m_i=N r_i=N r(x_i),\,x_i=\frac{i}{N}\in[0,1]$ and define $g(x)=\mu \bigl(1-2\sqrt{x(1-x)}\bigr)$. Then the scaled mean fitness $\overline{r}(\mu)=\overline{m}(\mu)/N$ is given by
\begin{equation}\label{eqAA:10}
    \overline{r}\approx \overline{r}_\infty=\sup_{x\in[0,1]}\bigl(r(x)-g(x)\bigr).
\end{equation}
The maximum principle \eqref{eqAA:10} holds when some additional technical conditions are satisfied. In particular, in \cite{Hermisson2002} it was assumed that $r(x)$ may have only finite number of discontinuities and be either left or right continuous at every point, which obviously does not hold for the fitness landscape \eqref{eqAA:3}. Formal application of the maximal principle \eqref{eqAA:10} to \eqref{eqAA:3} leads to incorrect results (i.e., for $\mu=1$ it predicts that $\overline{r}\approx 1$, which is wrong, see Fig.~\ref{figA:1}). The last conclusion emphasizes the importance of careful limit procedures and value of the exact formulas, which can be used on a case by case basis.
\subsection{Exact solution of the model with the single peaked fitness landscape}\label{example:2}
Consider the fitness landscape
$$
\bs m=N\bs r=N(1,0,\ldots,0).
$$
We use here the same notation as in Section \ref{example:1}, i.e., $\overline{r}(\mu)=\overline{m}(\mu)/N$. We have from \eqref{eq5:10} that
\begin{equation}\label{eqQ:1}
    \overline{r}(s)=\frac{1}{2^N}\sum_{k=0}^N\frac{\binom{N}{k}}{1+ks}\,,\quad \mu(s)=\frac{Ns}{2}\overline{r}(s),
\end{equation}
and the components of the leading eigenvector can be found as
$$
p_a=p_a(s)=\frac{1}{2^N}\sum_{k=0}^N \frac{\binom{N}{k}c_{ak}}{1+ks},\quad a=0,\ldots, N,
$$
or, using the previous,
\begin{equation}\label{eqQ:2}
    p_0=\overline{r}=\frac{1}{2^N}\sum_{k=0}^N\frac{\binom{N}{k}}{1+\frac{2\mu}{\overline{r}}\frac{k}{N}}\,,\quad  p_a=\frac{\binom{N}{a}}{2^N}\sum_{k=0}^N\frac{c_{ka}}{1+\frac{2\mu}{\overline{r}}\frac{k}{N}}\,,\quad a=1,\ldots,N.
\end{equation}

The rest of this section is devoted to the proof of 
\begin{proposition}\label{propQ:1}
For fixed $a$ and $\mu<1$ the limit distribution for $N\to\infty$ for the Crow--Kimura quasispecies model with the single peaked landscape is geometric:
\begin{equation}\label{eqQ:6}
    \overline{r}_{\infty}=\lim\limits_{N\to\infty}p_0=1-\mu\,,\quad
\lim\limits_{N\to\infty}p_a=(1-\mu)\mu^a\,,\quad a\geq
1\,.
\end{equation}
If $\mu\geq 1$ then $\overline{r}_\infty=0$ and the quasispecies distribution is degenerate.
\end{proposition}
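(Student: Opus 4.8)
The plan is to use the parametric solution \eqref{eqQ:1}--\eqref{eqQ:2}, but to track the auxiliary parameter $t=Ns$ rather than $\mu$ itself. Put $\phi_N(t)=2^{-N}\sum_{k=0}^N\binom Nk(1+tk/N)^{-1}$, so that \eqref{eqQ:1} reads $\overline r=p_0=\phi_N(t)$ and $\mu=\psi_N(t)$ with $\psi_N(t):=\tfrac t2\phi_N(t)$. A one-line differentiation gives $\psi_N'(t)=2^{-N-1}\sum_{k=0}^N\binom Nk(1+tk/N)^{-2}>0$, and since $\psi_N(0)=0$ and $\psi_N(t)\to\infty$ as $t\to\infty$, the map $\psi_N\colon[0,\infty)\to[0,\infty)$ is a continuous increasing bijection. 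Hence each mutation rate $\mu$ picks out a unique parameter value $t_N=\psi_N^{-1}(\mu)$, and the quasispecies is obtained from \eqref{eqQ:1}--\eqref{eqQ:2} with $s=t_N/N$.

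The first genuine step is to locate $\lim_N t_N$. Interpreting $\phi_N(t)=\mathbf{E}\bigl[(1+tK/N)^{-1}\bigr]$ with $K\sim\mathrm{Bin}(N,\tfrac12)$ and using $\mathbf{E}\,|K/N-\tfrac12|\le(2\sqrt N)^{-1}$ together with the fact that $x\mapsto(1+tx)^{-1}$ is $t$-Lipschitz on $[0,\infty)$, one gets $\sup_{0\le t\le T}\bigl|\phi_N(t)-(1+t/2)^{-1}\bigr|\le T/(2\sqrt N)$, hence $\psi_N\to\psi_\infty$ uniformly on compacts, where $\psi_\infty(t)=t/(2+t)$ maps $[0,\infty)$ increasingly onto $[0,1)$. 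If $\mu<1$, then $t_\infty:=\psi_\infty^{-1}(\mu)=2\mu/(1-\mu)$, and uniform convergence plus monotonicity of $\psi_N$ force $t_N\to t_\infty$ (for small $\varepsilon>0$ one has $\psi_N(t_\infty-\varepsilon)<\mu<\psi_N(t_\infty+\varepsilon)$ for all large $N$, so $t_N\in(t_\infty-\varepsilon,t_\infty+\varepsilon)$). If $\mu\ge1$, then $\psi_\infty(t)<1\le\mu$ for every finite $t$, so $t_N$ cannot stay bounded along a subsequence, i.e.\ $t_N\to\infty$. I expect this step — pinning down the limit of the implicitly defined $t_N$ — to be the crux; what follows is essentially bookkeeping. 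Already here the statement about $p_0=\overline r$ follows: for $\mu<1$, $p_0=\phi_N(t_N)\to(1+t_\infty/2)^{-1}=1-\mu$, and for $\mu\ge1$, $p_0=\phi_N(t_N)=2\mu/t_N\to0$.

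For the remaining components I would pass to generating functions. Using the symmetry $c_{ka}\binom Na=c_{ak}\binom Nk$ from \eqref{eq3:6new} and $c_{0k}=1$, formula \eqref{eqQ:2} takes the uniform-in-$a$ form $p_a=2^{-N}\sum_{k=0}^N\binom Nk\,c_{ak}(1+t_Nk/N)^{-1}$ for $a=0,\dots,N$; multiplying by $z^a$, summing over $a$, and invoking the generating identity \eqref{eq3:2} yields $G_N(z):=\sum_{a=0}^N p_a z^a=2^{-N}\sum_{k=0}^N\binom Nk\,(1-z)^k(1+z)^{N-k}(1+t_Nk/N)^{-1}$. For real $z\in(-1,1)$ the weights $\binom Nk\bigl(\tfrac{1-z}{2}\bigr)^k\bigl(\tfrac{1+z}{2}\bigr)^{N-k}$ form a $\mathrm{Bin}(N,\tfrac{1-z}{2})$ law, so $G_N(z)=\mathbf{E}\bigl[(1+t_NK_z/N)^{-1}\bigr]$ with $K_z\sim\mathrm{Bin}(N,\tfrac{1-z}{2})$. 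The same Lipschitz/Chebyshev estimate as above (now also using $t_N\to t_\infty$) gives, for each fixed $z\in(-1,1)$: $G_N(z)\to(1+t_\infty(1-z)/2)^{-1}=(1-\mu)/(1-\mu z)$ when $\mu<1$, and $G_N(z)\to0$ when $\mu\ge1$ (since $K_z/N$ stays bounded below its positive mean while $t_N\to\infty$).

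Finally I would deduce coefficientwise convergence. Each $(p_a(\mu))_{a\ge0}$ is a probability vector, so any subsequence admits a further subsequence along which $p_a\to q_a\in[0,1]$ for every $a$ (diagonal extraction), with $\sum_a q_a\le1$. For $z\in[0,1)$ the domination $p_a z^a\le z^a$ permits passing to the limit term by term, so $\sum_a q_a z^a=\lim_N G_N(z)$ on $[0,1)$. When $\mu<1$ this equals $\sum_a(1-\mu)\mu^a z^a$, so $q_a=(1-\mu)\mu^a$ by uniqueness of power-series coefficients; when $\mu\ge1$ it equals $0$, so $q_a=0$ for all $a$. Since the limit does not depend on the subsequence, the full sequence converges, giving $p_a(\mu)\to(1-\mu)\mu^a$ for $\mu<1$ and $p_a(\mu)\to0$ for $\mu\ge1$; the value $a=0$ recovers $\overline r_\infty=1-\mu$, resp.\ $\overline r_\infty=0$, the latter case being the claimed degenerate distribution.
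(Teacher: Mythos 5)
Your argument is correct, but it follows a genuinely different route from the paper. The paper works directly with the components: for $p_0=\overline r$ it pairs the binomial terms $k\leftrightarrow N-k$ to get the two-sided bound \eqref{eqQ:9}, and for $a\ge 1$ it applies iterated Abel transformations to reach \eqref{eqQ:14} and then proves the explicit error estimate of Lemma \ref{lemQ:1} (Appendix \ref{append:D}), from which the geometric limit follows with quantitative $O(1/\sqrt N)$ rates. You instead recast \eqref{eqQ:1}--\eqref{eqQ:2} probabilistically: $\overline r$ and, via the symmetry \eqref{eq3:6new} and the generating identity \eqref{eq3:2}, the whole generating function $G_N(z)=\sum_a p_a z^a$ become binomial expectations of $(1+t_N K/N)^{-1}$, so Chebyshev concentration plus the monotone-bijection argument for $\psi_N$ pins down $t_N\to 2\mu/(1-\mu)$ (resp.\ $t_N\to\infty$) and yields $G_N(z)\to(1-\mu)/(1-\mu z)$ (resp.\ $0$); a subsequence/dominated-convergence/uniqueness-of-coefficients step then converts this into componentwise convergence. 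What each approach buys: yours is shorter and more conceptual, avoids the Abel-summation machinery entirely, and cleanly proves $p_a\to 0$ for every fixed $a$ when $\mu\ge 1$ (the paper only deduces $\overline r\to 0$ there and declares the limit degenerate); on the other hand, the paper's method delivers the explicit speed-of-convergence estimates \eqref{eqQ:9} and \eqref{eqQ:27}, which the authors emphasize as part of the contribution, whereas your compactness step for the coefficients forfeits any rate (your quantitative control survives only for $p_0$ and for $G_N(z)$ at fixed $z$, and recovering per-component rates from real-variable convergence of the generating function would require extra work).
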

\begin{proof}Introduce the notation $u=\dfrac{\mu}{\overline{r}}\,,$ and note that due to \eqref{eq3:2} we have
$$
\sum_{k=0}^N{N\choose k}=P_0(1)=2^N,\quad \sum_{k=0}^Nc_{ka}=P_a(1)=0,\quad a\geq 1.
$$
Therefore, for $a=0$ we obtain
$$
\overline{r}-\frac{1}{1+u}=\frac{1}{2^{N}}\sum_{k=0}^{N} {N\choose
k}\left(\frac{1}{1+2u\frac{k}{N}}-\frac{1}{1+u}\right)=\frac{u}{N
2^{N}}\sum_{k=0}^{N} {N\choose
k}\frac{N-2k}{\left(1+2u\frac{k}{N}\right)(1+u)}\,.
$$
Using the facts that ${N\choose k}={N\choose N-k}$ and $N-2(N-k)=-(N-2k)$, we get
\begin{align*}
\overline{r}-\frac{1}{1+u}&=\frac{u}{N^{N}}\sum_{2k\leq N}
{N\choose k}(N-2k)\left(\frac{1}{\left(1+2u\frac{k}{N}\right)(1+u)}-
\frac{1}{\left(1+2u\frac{N-k}{N}\right)(1+u)}\right)\\
&=\frac{2u^2}{(1+u)N^22^{N}}\sum_{2k\leq N}
\frac{{N\choose k}(N-2k)^2}{\left(1+2u\frac{k}{N}\right)\left(1+2u\frac{N-k}{N}\right)}\\
&\leq \frac{u^2}{(1+u)N^2 2^{N}}\sum_{k=0}^{N} {N\choose
k}(N-2k)^2\,.
\end{align*}
Since $2^{-N}\sum_{k=0}^N{N\choose k}(N-2k)^2=N$ (e.g., here the left hand side is 4 times the variance of the binomial random variable with parameters $N$ and $1/2$, which is $N/4$), the last inequality yields the estimate
\begin{equation*}
    0\leq \overline{r}-\frac{1}{1+u}\leq \frac{u^2}{(1+u)N}\,,
\end{equation*}
or, returning to the original parameters,
\begin{equation}\label{eqQ:9}
    0\leq \overline{r}^2(\overline{r}+\mu-1)\leq \frac{\mu^2}{N}\,.
\end{equation}
The inequality \eqref{eqQ:9} implies
\begin{itemize}
\item the estimate $\overline{r}\geq 1-\mu$;
\item for $\mu=1$ the estimate $\overline{r}^3\leq 1/N$, which means that $\lim_{N\to\infty}\overline{r}=0$ for $\mu=1$;
\item for the case $\mu>1$ the estimate $\overline{r}^3\leq \overline{r}^2(\overline{r}+\mu-1)\leq \mu^2/N$ and hence $\lim_{N\to\infty}\overline{r}=0$ for $\mu>1$. Therefore, in the limit $N\to\infty$ the distribution $\bs p$ is degenerate;
\item in the case $\mu<1$ the estimate
$$0\leq \overline{r}+\mu-1\leq\frac{\mu^2}{\overline{r}^2N}\leq \frac{\mu^2}{(1-\mu)^2N}\,,$$
which proves the first equality in \eqref{eqQ:6}, and also gives an estimate of the speed of convergence.
\end{itemize}
\begin{remark}At this point we would like to note that the first equality in \eqref{eqQ:6} is a well known fact, which originally was proved in \cite{galluccio1997exact} and also elementary follows from the maximum principle \eqref{eqAA:10}. We present a full proof of this fact to illustrate the general approach by the parametric solution \eqref{eq5:10} and specify an exact constant in the expression $\overline{r}=\overline{r}_\infty+O(N^{-1})$, which is, to the best of our knowledge, new. In \cite{galluccio1997exact} also the expression for $p_0$ is given, which is basically $\overline{r}_\infty$, and an integral representation of the components of the quasispecies vector (see (49) in the cited text); to actually compute this distribution or obtain approximations with easily obtained error estimates are a separate nontrivial problem, as can be seen in \cite{galluccio1997exact}. In \cite{saakian2004ese} the limit distribution for the quasispecies was derived without proof of convergence. The rest of the our proof shows that actually the limit distribution is \textit{geometric} and gives estimates of the speed of convergence.
\end{remark}

Now we treat the {case} $a\geq 1$.

It can be shown (see the proof of Lemma \ref{lemQ:1} below) that for fixed $N$
\begin{equation}\label{eqQ:14}
    p_a=\frac{{N\choose a}\cdot a!  u^a}{N^a
2^{N-a}}\sum_{k=0}^{N-a} \frac{{N-a\choose
k}}{\prod\limits_{j=k}^{k+a}\left(1+2u\frac{j}{N}\right)}\,.
\end{equation}
We know that for $\mu\geq 1$ the distribution $\bs p$ is degenerate, therefore we are only interested in the case $\mu<1$, therefore $u=\mu/\overline{r}\to\mu/(1-\mu)$, and hence to conclude the proof we need to show that
for $u\geq 0$
\begin{equation}\label{eqQ:15}
   \lim_{N\to\infty} p_a= \lim\limits_{N\to\infty} \frac{{N\choose a} a! u^a}{N^a
2^{N-a}}\sum_{k=0}^{N-a} \frac{{N-a\choose
k}}{\prod\limits_{j=k}^{k+a}\left(1+2u\frac{j}{N}\right)}=\frac{u^a}{(1+u)^{a+1}}\,.
\end{equation}
\begin{lemma}\label{lemQ:1}
The estimate
\begin{equation}\label{eqQ:27}
    \left|p_a-\frac{u^a}{(1+u)^{a+1}}\right|\leq\frac{u^{a+1}
2^a(a+1)}{\sqrt{N}}+\left(1-\frac{{N\choose a} a!
}{N^a}\right)\frac{u^a}{(1+u)^{a+1}}\,
\end{equation}
holds.
\end{lemma}
We defer the proof of Lemma \ref{lemQ:1} to Appendix \ref{append:D} and note that the estimate of the speed of convergence can be simplified by taking into account that
$$
0\leq 1-\frac{{N\choose a}a!}{N^a}\leq \frac{a(a-1)}{2N}\,,
$$
which can be proved by, e.g., induction.

Lemma \ref{lemQ:1} implies the limit \eqref{eqQ:15}, and therefore
$$
\lim_{N\to\infty}p_a=\lim_{N\to\infty}\frac{u^a}{(1+u)^{a+1}}=\mu^a(1-\mu),
$$
which concludes the proof of Proposition \ref{propQ:1}.
\end{proof}
In Figure \ref{figQ:1} a comparison of numerical solutions of the quasispecies model with the single peaked fitness landscape for $N=100$ with the geometric limit distribution \eqref{eqQ:6} is given.
\begin{figure}[!th]
\centering
\includegraphics[width=\textwidth]{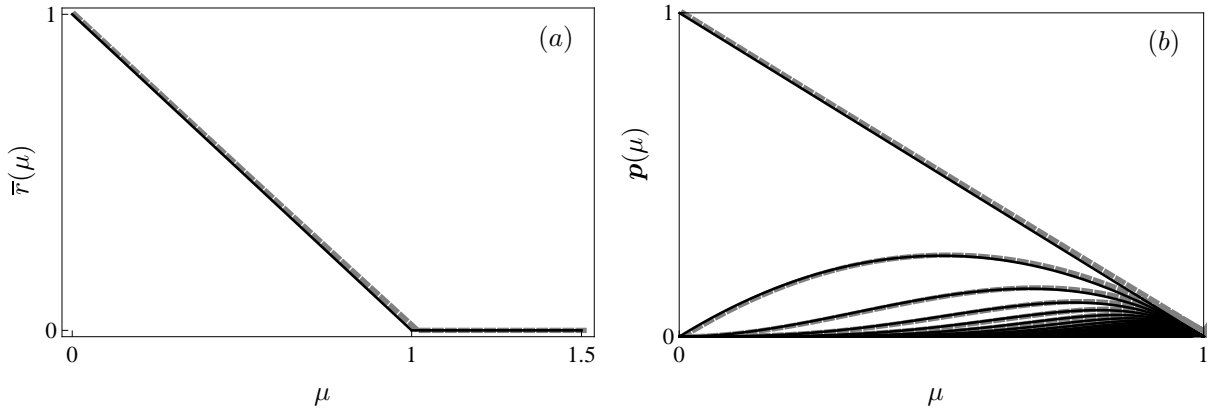}
\caption{Comparison of numerical solutions of the quasispecies model (gray, dashed) with the single peaked fitness landscape for $N=100$ with the geometric limit distribution \eqref{eqQ:6} (black). $(a)$ The expressions for the leading eigenvalue; $(b)$ the quasispecies distribution}\label{figQ:1}
\end{figure}

\section{Approximate formulas for epsilon stabilization}\label{ep_stab}
Due to Theorem \ref{th4:1} we know that for any problem of the form \eqref{eq1:5} the solution admits asymptotic stabilization, which immediately implies that the dominant eigenvalue approaches $\hat{\overline{m}}$ as $\mu\to\infty$. However, from the practical point of view the following question can be asked: Given a small fixed $\varepsilon$ can we find the critical value of the mutation rate $\mu^*_\varepsilon$ for which $\overline{m}(\mu)$ is $\varepsilon$-close to the limit value? In this section we suggest an approximate heuristic solution to this question which, in some sense, generalizes the textbook formula for the classical error threshold in the quasispecies theory (see the next section).

In the following we will need several facts about perturbations of the eigenvalues of a linear operator, which we collect here. First, function $\overline{m}(\mu)$ is smooth, which follows from the results on the perturbation of a simple eigenvalue of a matrix \cite{kato1995perturbation,rellich1969perturbation,vishik1960solution}. In \cite{kato1995perturbation} explicit formulas for the derivatives of $\overline{m}(\mu)$ are given, which in our notations can be written as follows. Let $\mu=\mu_0\geq 0,\,\overline{m}(\mu_0)=\overline{m}_0,\,\bs{p}(\mu_0)=\bs{p}^0$. Then
\begin{equation}\label{eq2:1}
    \left.\frac{d\overline{m}}{d\mu}\right|_{\mu=\mu_0}=\overline{m}'(\mu_0)=\bs{Qp}^0\cdot \bs q^0,
\end{equation}
where $\bs{q}^{0}$ is the eigenvector of the adjoint problem
$
(\bs{M}+\mu_0\bs{Q}^\top)\bs{q}^{0}=\overline{m}_0\bs{q}^{0},
$ normalized such that $\bs{p}^0\cdot \bs{q}^{0}=1$ holds (eigenvector $\bs q^0$ is actually related to the so-called \emph{ancestral} distribution, see \cite{Hermisson2002} for more details). For the case $m_0>m_1\geq \ldots\geq m_N$ and $\mu_0=0$ one has
\begin{equation}\label{eq2:3}
    \overline{m}'(0)=-N,\quad \overline{m}''(0)=\frac{2N}{m_0-m_1}\,.
\end{equation}
In the case $m_0=m_1>m_2\geq \ldots \geq m_N$, i.e., when matrix $\bs{M}$ has the maximal eigenvalue of algebraic multiplicity 2, one has \cite{vishik1960solution}
\begin{equation}\label{eq2:7}
    \overline{m}'(0)=-N\pm\sqrt{N},
\end{equation}
i.e., the multiple eigenvalue of $\bs{M}$ splits into two simple ones.

We continue with a definition of the epsilon stabilization, which will be convenient to use for our approximate calculations:

\begin{definition}\label{def6:1}We shall say that the dominant eigenvalue $\overline{m}(\mu)$ of the problem \eqref{eq1:5} admits epsilon stabilization if for a small enough $\varepsilon>0$ there exist constant $\overline{m}^*_\varepsilon$ and such value of ${\mu}^*_\varepsilon$ that for all $\mu>\mu^*_\varepsilon$ the following conditions hold:
\begin{equation}\label{eq6:1}
    |\overline{m}(\mu)-\overline{m}^*_\varepsilon|<\varepsilon,\quad |\overline{m}'(\mu)|<\varepsilon.
\end{equation}
\end{definition}
This definition is weaker than the definition for asymptotic stabilization; its advantage is that it is more computationally oriented.

For the following it is convenient to rewrite the limit for the mean population fitness as
\begin{equation}\label{eq6:2}
    \hat{\overline{m}}=\lim_{\mu\to\infty}\overline{m}(\mu)=1+2^{-N}\sum_{k=0}^N(m_k-1)\binom{N}{k}.
\end{equation}
We also consider the scaling of the fitness vector $\bs{m}$ such that the minimum fitness is equal to 1: $m_0>m_1\geq m_2\geq\ldots\geq m_N=1$.

To find an approximate value for $\mu^*_\varepsilon$ we note that in the plane $\bigl(\mu,\overline{m}(\mu)\bigr)$ the condition
\begin{equation}\label{eq6:5}
\det\bigl(\bs{M}-\mu \bs{Q}_N-\overline{m}(\mu)\bs{I}\bigr)=0
\end{equation}
defines a curve, which we usually do not know explicitly, but which can be efficiently calculated using the parametric solution from Section \ref{sec5}. We shall call this curve \textit{the critical curve}. Moreover, due to Theorem \ref{th4:1} and asymptotic stabilization of the solution to \eqref{eq1:5}, this curve approaches the straight line $\overline{m}(\mu)\to \hat{\overline{m}}$.

Denote
$$
\delta=2^{-N}\sum_{k=0}^N (m_k-1)\binom{N}{k}\,.
$$
If $\delta<\varepsilon$ then the second term in the right hand side of \eqref{eq6:2} is negligible, which means that $\hat{\overline{m}}\approx\overline{m}^*=1$ in the plane $(\mu,\overline{m})$. Now using \eqref{eq2:3} we approximate the critical curve by a polynomial of the second degree, emanating from the point $(0,m_0)$:
\begin{equation}\label{eq6:6}
    \overline{m}_{app}=m_0-N\mu+\frac{N}{m_0-m_1}\mu^2.
\end{equation}
The curve \eqref{eq6:6} crosses the line $\hat{\overline{m}}=1$ at the point
\begin{equation}\label{eq6:3}
    \mu^*_\varepsilon=\frac{(m_0-m_1)}{2}\left(1-\sqrt{1-\frac{4(m_0-1)}{(m_0-m_1)N}}\right),\quad\mbox{if }\delta<\varepsilon.
\end{equation}
If $\delta\geq \varepsilon$ then the second term in \eqref{eq6:2} must be taken into account; i.e., we have that the critical curve should be $\varepsilon$ close to ${\overline{m}^*}=1+\delta$. Then the critical value of $\mu^*_\varepsilon$ is given by
\begin{equation}\label{eq6:4}
    \mu^*_\varepsilon=\frac{(m_0-m_1)}{2}\left(1-\sqrt{1-\frac{4(m_0-1)(\delta+1)}{(m_0-m_1)N}}\right),\quad\mbox{if }\delta\geq \varepsilon.
\end{equation}
To sum up, formulas \eqref{eq6:3} and \eqref{eq6:4} can be used as heuristic approximations for the critical mutation rate.

Expressions \eqref{eq6:3} and \eqref{eq6:4} can be simplified if instead of the curve \eqref{eq6:6} a straight line, starting at $(0,m_0)$, is taken:
$$
\overline{m}_{app}=m_0-N\mu.
$$
In this particular case
\begin{equation}\label{eq6:7}
    \mu^*_\varepsilon=\frac{m_0-1}{N}\,,\quad\mbox{if }\delta<\varepsilon,
\end{equation}
and
\begin{equation}\label{eq6:8}
    \mu^*_\varepsilon=\frac{m_0-(1+\delta)}{N}\,,\quad\mbox{if }\delta\geq\varepsilon.
\end{equation}

Finally, if matrix $\bs{M}$ has a maximal eigenvalue of multiplicity two then instead of \eqref{eq2:3} expression \eqref{eq2:7} should be taken into account.

\section{The error threshold}
Arguably the most important implication of Eigen's quasispecies theory is the presence of the so-called error threshold, which is not easy to define rigorously, but which can be described by the following famous example, which accompanies almost any discussion on the quasispecies model (and which we already have given a full analytical treatment in Section \ref{example:2}).

Consider again the Crow--Kimura quasispecies model \eqref{eq1:1} with the single peaked fitness landscape, i.e., $\bs M=\diag (m_0,m_1,\ldots,m_1),\,m_0>m_1$, and plot the stationary distribution of frequencies of different classes as the function of the mutation rate $\mu$. The result is shown in Fig.~\ref{fig:1}.
\begin{figure}[!b]
\includegraphics[width=\textwidth]{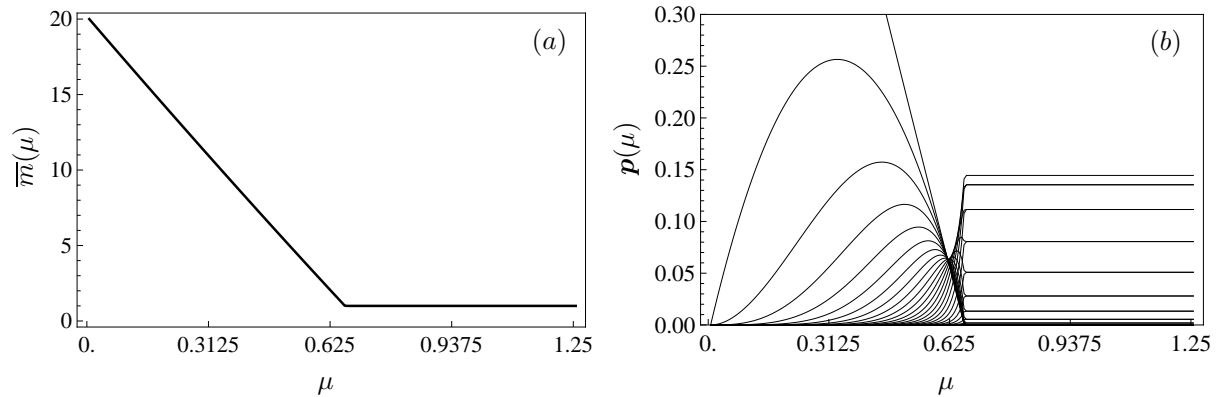}
\caption{Error threshold in the quasispecies model \eqref{eq1:1} with the single peaked fitness landscape ($\bs{m}=(m_0,m_1,\ldots,m_1),\,m_0>m_1$). The parameters are $N=30,\,m_0=20,\,m_1=1$. $(a)$ The mean population fitness $\overline{m}(\mu)$ versus the mutation rate; $(b)$ the stationary quasispecies distribution versus the mutation rate}\label{fig:1}
\end{figure}

A vague definition of the error threshold is that it separates two regimes of mutation--selection balance characterized by a qualitatively different structure of the principal eigenvector (quasispecies). For small mutation rates (see Fig.~\ref{fig:1}) the eigenvector is localized around the fittest sequence (the class that has fitness $m_0$). When the mutation rate is increased beyond the error threshold, the principal eigenvector becomes delocalized and the population spreads uniformly throughout the sequence space (for our particular model, as shown in Fig. \ref{fig:1}, ``spreads uniformly'' means that the distribution of the \textit{classes} of sequences is binomial). As Eigen and his co-authors write: ``Surpassing the threshold means melting of the quasi-species due to accumulation of errors. Such an error catastrophy means a sharp loss of genetic information.''~\cite{eigen1988mqs}

In Fig.~\ref{fig:1} we can observe a sharp transition between these two regimes; however, as we already discussed, if we deal with finite matrices, both eigenvalues and eigenvectors depend smoothly on its entries. Therefore, for the error threshold to become \textit{sharp} in the sense of producing some non-analytical behavior of the population distribution or the mean population fitness one needs the limit $N\to\infty$ (see Section \ref{example:2} for the exact procedure and results in the case of $N\to\infty$, when the leading eigenvalue $\overline{r}_\infty(\mu)=\max\{1-\mu,0\}$ is non-analytic at the point $\mu=1$). It is then necessary to rescale the parameters of the system, to observe in the limit what is called in statistical physics \textit{the phase transition} \cite{baake2001mutation}, and this can be taken as one of the rigorous definitions of the error threshold. In particular, in the left panel in Fig.~\ref{fig:1} the critical curve $\overline{m}(\mu)$ is shown; we proved in Section \ref{example:2}, as is also well known in the literature, that the leading eigenvalue has a jump of the first derivative. However, for any finite $N$, opposite to what our eye observes in the picture, this curve is smooth.

Another mathematically rigorous definition of the error threshold may be the critical mutation rate above which the distribution of the sequence classes becomes binomial (i.e., the distribution of the sequence types becomes uniform). However, again, as we showed in Proposition \ref{lem4:1}, this is possible for any finite $N$ only in the limit $\mu\to\infty$, and despite the fact that the distribution of the classes in Fig.~\ref{fig:1} looks very close to binomial, it slightly differs from the binomial distribution.

A detailed discussion of possible \textit{threshold-like behaviors} in the quasispecies model can be found in \cite{Hermisson2002}, where all exact definitions are based on infinite class limit $N\to\infty$ or on the classical approximation that ``the fittest type sequence becomes extinct.'' Here we would like to avoid any discussion of the infinite class limit, and also are not inclined to rely on the discussion of the \textit{extinction} phenomenon in the models that are formulated for the frequencies, as opposed to the absolute sizes (we note that the phenomenon of extinction has to be addressed in models with vital dynamics, as was stressed in, e.g., \cite{bull2007theory}).

There are several points to note about the error threshold.

First, as it is well known \cite{wiehe1997model}, not every fitness landscape produces the error threshold  defined vaguely as a \textit{sharp} transition. The classical example is the so-called Fujiyama or additive fitness landscape, in which the entries of the fitness vector $\bs m$ can be defined, e.g., as $m_k=k\log (1-s)$ for some constant $0<s<1$. The quasispecies vector versus the mutation rate is shown in Fig.~\ref{fig:2}.
\begin{figure}[!tbh]
\includegraphics[width=\textwidth]{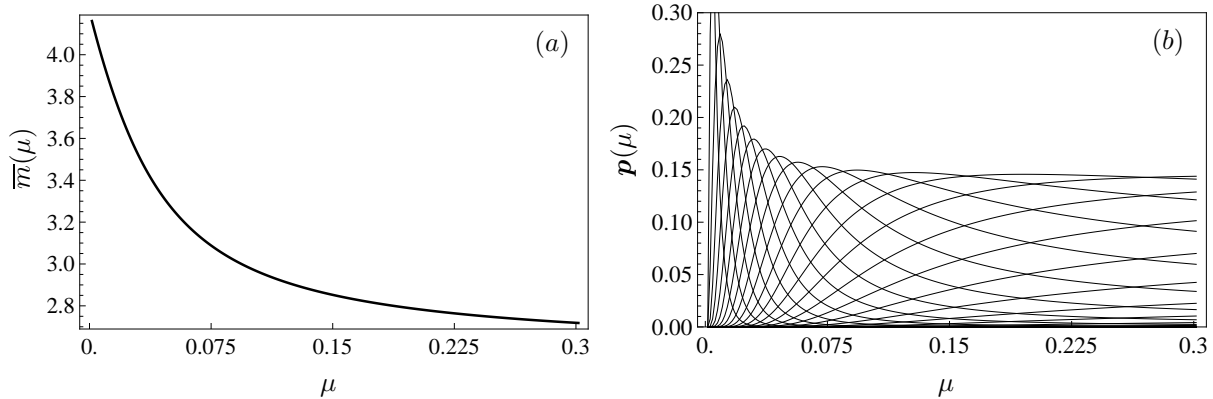}
\caption{Absence of the error threshold as a \textit{sharp} transition in the quasispecies model \eqref{eq1:1} with the additive fitness landscape $m_k=k\log (1-s),\,k=0,\ldots, N$. The parameters are $N=30,\,s=0.1$. $(a)$ The mean population fitness $\overline{m}(\mu)$ versus the mutation rate; $(b)$ the stationary quasispecies distribution versus the mutation rate}\label{fig:2}
\end{figure}

The second point is that given an arbitrary fitness landscape how one would estimate the critical mutation rate of the error threshold. A well known formula for the model \eqref{eq0:5} says that the critical mutation probability is inversely proportional to the sequence length and for the single peaked fitness landscape can be estimated as
$$
s^*=\frac{\log \frac{w_0}{w_1}}{N}\,.
$$
This approximation was found using the condition, which, provided the back mutations to the master sequence are prohibited, implies the extinction of this sequence.

For the Crow--Kimura model \eqref{eq1:1} and the single peaked fitness landscape the same approach yields
$$
\mu^*=\frac{m_0-m_1}{N}\,,
$$
which is exactly what we found by different methods in \eqref{eq6:7}. Although this formula works great for the single peaked landscape, it is expected that for many other fitness landscapes it will give a significant error (and it is not clear how to interpret its prediction for, e.g., additive fitness landscape, which lacks any \textit{sharp} transition).

Incidentally, the parametric solution we found in Section \ref{sec5} allows us to suggest another heuristic formula for the error threshold critical mutation rate. Ask the question: What do we see in Fig.~\ref{fig:1}? One of the pronounced features of this figure is that it looks like there exists a mutation rate such that at this rate all the frequencies are approximately the same. Hence, let us rephrase the question: For which matrices $\bs M$ all frequencies $p_k(\mu)$ pass for some $\mu^*$ through the same point? Obviously, we should have
\begin{equation}\label{eq7:2}
    p_0(\mu^*)=\ldots=p(\mu^*)=\frac{1}{N+1}\,,\quad \overline{m}(\mu^*)=\frac{1}{N+1}\sum_{k=0}^Nm_k.
\end{equation}
Straightforward computations show that the necessary and sufficient condition for $\bs M$ is
$$
\bs M=m \diag (1+r,1,\ldots,1,1+r)
$$
for some constants $m>0$ and $r>0$. Let $m=1$. Then we find
$$
\mu^*=\frac{r}{N+1}\,,\quad \overline m(\mu^*)=1+\frac{2r}{N+1}=1+2\mu^*.
$$
The curve $\bs p(\mu)$ at $\mu=\mu^*$ passes through the barycenter of $S_{N+1}$. Now consider another matrix $\bs M'$ which is close to $\bs M$. We can suppose (and the parametric solutions from Section \ref{sec5} show this as well) that the corresponding curve $\bs p(\mu)$ will pass close to the barycenter of $S_{N+1}$ and visually we will observe the picture that is very similar to the one in Fig. \ref{fig:1}. The results can be seen in Fig.~\ref{fig:3}.

\begin{figure}[!th]
\includegraphics[width=\textwidth]{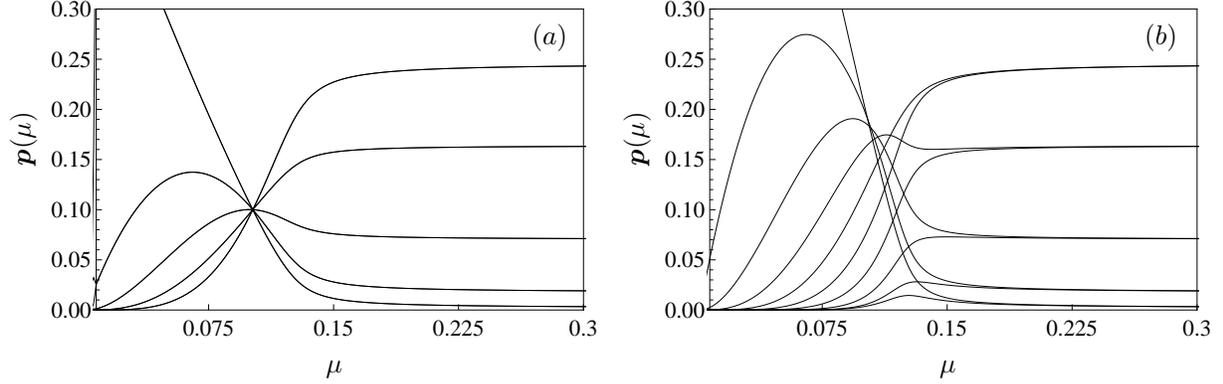}
\caption{Geometric view at the error threshold. $(a)$ The stationary quasispecies distribution versus mutation rate for $\bs{M}=\diag(1+r,1,\ldots,1,1+r),\,N=9,\,r=1$. $(b)$ The stationary quasispecies distribution versus mutation rate for $\bs{M}=\diag(1+r,1,\ldots,1,1+r'),\,N=9,\,r=1,\,r'=0.95$}\label{fig:3}
\end{figure}

A number of other examples (see Appendix \ref{append:C}) suggest the following formula to determine~$\mu^*$:
\begin{equation}\label{eq7:7}
    \overline{m}(\mu^*)=m_{\min} +2\mu^*.
\end{equation}
Since $\overline{m}(\mu^*)$ is the largest root of the equation $\det(\bs M+\mu^*\bs Q-\overline{m}\bs I)=0,$ then from condition~\eqref{eq7:7} an algebraic equation of degree $N+1$
\begin{equation}\label{eq7:8}
    \det(\bs M+\mu^* \bs Q-(m_{\min}+2\mu^*)\bs I)=0
\end{equation}
follows, from which critical $\mu^*$ should be determined.

For example, for the quasispecies model in Fig.~\ref{fig:1}, we find that using the linear approximation \eqref{eq6:7}, quadratic approximation \eqref{eq6:3}, and geometrically inspired formula \eqref{eq7:8} the values for $\mu^*$ are 0.633, 0.656, and 0.613 respectively. From Fig.~\ref{fig:1}$(a)$ we can estimate that the value of $\mu$ at which $\overline{m}(\mu)$ has a corner is $0.66$, which is better predicted by the quadratic approximation~\eqref{eq6:3}.

To give one more example, consider the fitness landscape of the form
\begin{equation}\label{eq7:9}
    m_k=k^\alpha\log (1-s),\quad k=0,\ldots,N,
\end{equation}
such that for $0<\alpha<1$ one has positive epistasis, and for $1<\alpha$ this fitness models negative epistasis, in case $\alpha=1$ we obtain again the additive fitness landscape. In Fig.~\ref{fig:4} an example of the quasispecies model with \eqref{eq7:9} is shown. The results of computations are $\mu^*=0.09$ by \eqref{eq6:7}, $\mu^*=0.22$ by \eqref{eq6:3} and $\mu^*=0.33$ by \eqref{eq7:8}.
\begin{figure}[!th]
\includegraphics[width=\textwidth]{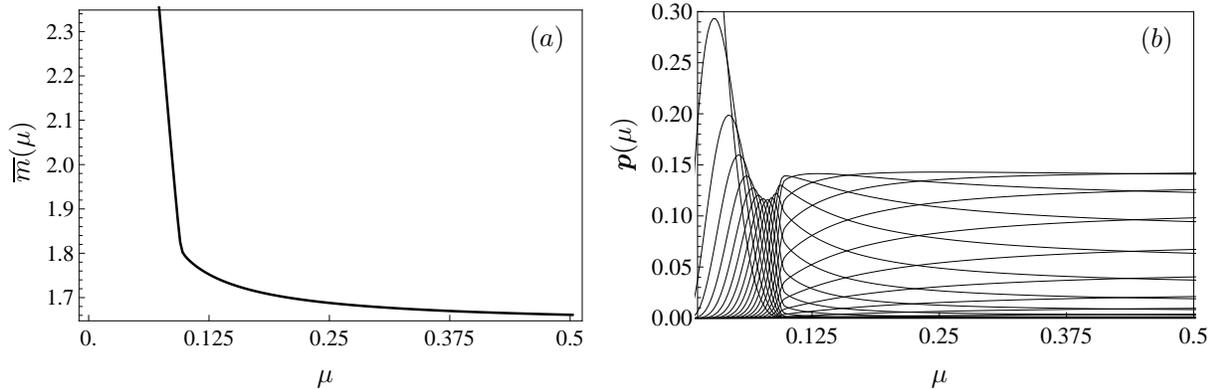}
\caption{Error threshold in the quasispecies model \eqref{eq7:9} with the positive epistasis; parameters are $N=30,\,s=0.7,\,\alpha=0.3$. $(a)$ The mean population fitness $\overline{m}(\mu)$ versus the mutation rate; $(b)$ the stationary quasispecies distribution versus the mutation rate}\label{fig:4}
\end{figure}

Numerical computations with other values of parameters confirm the following conclusion: When the fitness landscape is such that the graphs of $\bs p(\mu)$ do not cross at the same point, formula \eqref{eq7:8} gives a large error, whereas for the cases when  the frequencies of different classes pass close to a common point it gives accurate predictions; the linear approximation \eqref{eq6:7} points to the values of the mutation rate after which the structure of the quasispecies changes, and quadratic approximation \eqref{eq6:3} points to the point after which, in full accordance with the notion of epsilon stabilization, the quasispecies distribution of classes of sequences is close to the binomial one (and hence the types of sequences are distributed almost uniformly).

Summarizing, we would like to conclude this section with several points for a future discussion:
\begin{itemize}
\item The uniform distribution of the types of sequences is, in our opinion, the most convenient way to define the error threshold mathematically.
\item The uniform distribution is attained in the limit for any fitness landscape (Theorem \ref{th4:1}). Hence the error threshold, understood in the weak sense as a limit uniform distribution, is inherent in the quasispecies model and does not depend on the fitness landscape, contrary to the fact that the phase transition phenomenon (as well as other possible threshold-like transitions) depends on the fitness landscape.
\item Since for the finite sequence length the distribution is never exactly uniform (except for the trivial case of the scalar matrix $\bs{M}$), the definition of the error threshold should include the degree of closeness to the uniform distribution (or to the binomial distribution if we speak of the classes of sequences). Therefore, Definition \ref{def6:1} of the epsilon stabilization should be used as a mathematically rigorous definition of the error threshold.
\item Various approximate formulas are possible to obtain to estimate the critical value of the error threshold defined through the notion of epsilon stabilization (see Section \ref{ep_stab} and \eqref{eq7:8}). Among other things, we mention that not all of them show that the critical mutation rate is inversely proportional to the sequence length, as usually implied in biologically oriented discussions.
\end{itemize}

\appendix

%
%
%

\section{Solutions to system \eqref{eq1:5}} \label{append:B}
In the main text we presented a parametric solution \eqref{eq5:10}, which is valid only in the case when the fitness landscape has one non-zero entry. Here we present a general solution to system \eqref{eq1:5}.
\subsection{Solution to system \eqref{eq1:5} in case of two positive fitnesses}
Let $m_{j_1}>0,\,m_{j_2}>0$ and all other $m_i=0$. Then, using \eqref{eq5:5},
\begin{equation}\label{eq5:14}
   {F}(s)=\overline{m}=m_{j_1}p_{j_1}(s)+m_{j_2}p_{j_2},
\end{equation}
and function ${F}(s)$ is positive at least for small $s$ since ${F}(s)\to m_{\max}$ as $s\to 0$. System \eqref{eq5:4} takes the form
\begin{equation}\label{eq5:15}
\begin{split}
    m_{j_1}p_{j_1}&=\overline{m}\sum_{k=0}^Nc_{{j_1},k}(1+ks)x_k,\quad    m_{j_2}p_{j_2}=\overline{m}\sum_{k=0}^Nc_{{j_2},k}(1+ks)x_k,\\
   0&=\overline{m}\sum_{k=0}^Nc_{ik}(1+ks)x_k,\quad i=0,\ldots, N,\,i\neq j_1,\,i\neq j_2,\quad   p_i=\sum_{k=0}^Nc_{ik}x_k,\quad i=0,\ldots,N,
\end{split}
\end{equation}
or, after dividing by ${F}(s)=\overline{m}$,
\begin{equation}\label{eq5:16}
\begin{split}
    \frac{m_{j_1}p_{j_1}}{F(s)}&=\sum_{k=0}^Nc_{{j_1},k}(1+ks)x_k,\quad     \frac{m_{j_2}p_{j_2}}{F(s)}=\sum_{k=0}^Nc_{{j_2},k}(1+ks)x_k,\\
   0&=\sum_{k=0}^Nc_{ik}(1+ks)x_k,\quad i=0,\ldots, N,\,i\neq j_1,\,i\neq j_2,\quad   p_i=\sum_{k=0}^Nc_{ik}x_k,\quad i=0,\ldots,N.
\end{split}
\end{equation}
In the matrix form \eqref{eq5:16} is
$$
\frac{m_{j_1}p_{j_1}}{F(s)}\bs{e}_{j_1}+\frac{m_{j_2}p_{j_2}}{F(s)}\bs{e}_{j_2}=\bs{C}\diag(1,1+s,1+2s,\ldots,1+Ns)\bs x,\quad \bs{p}=\bs{Cx}.
$$
Using the fact that $\bs{C}=2^{-N}\bs{C}$, we obtain
$$
\diag(1,1+s,1+2s,\ldots,1+Ns)\bs x=2^{-N}\left(\frac{m_{j_1}p_{j_1}}{F(s)}\bs{Ce}_{j_1}+\frac{m_{j_2}p_{j_2}}{F(s)}\bs{Ce}_{j_2}\right)
$$
or, in coordinates, using \eqref{eq5:11}:
\begin{equation}\label{eq5:17}
\begin{split}
    x_k(s)&=\frac{1}{2^N}\frac{c_{k,{j_1}}m_{j_1}+c_{k,{j_2}}m_{j_2}}{F(s)(1+ks)}, \quad p_k(s)=\frac{m_{j_1}p_{j_1}(s)}{F(s)}F_{k,j_1}(s)+\frac{m_{j_2}p_{j_2}(s)}{F(s)}F_{k,j_2}(s).
\end{split}
\end{equation}
In system \eqref{eq5:17} put $j_1$ and $j_2$ in equations for $p_k$:
\begin{equation}\label{eq5:18}
\begin{split}
F(s)p_{j_1}(s)=m_{j_1}F_{j_1,j_1}(s)p_{j_1}+m_{j_2}F_{j_1,j_2}(s)p_{j_2},\\
F(s)p_{j_2}(s)=m_{j_1}F_{j_2,j_1}(s)p_{j_1}+m_{j_2}F_{j_2,j_2}(s)p_{j_2}
\end{split}
\end{equation}
System \eqref{eq5:18} has to have a nontrivial solution (at least for $s$ close to 0), for which it is necessary and enough that
\begin{equation}\label{eq5:19}
 \det\begin{bmatrix}
       m_{j_1}F_{j_1,j_1}(s)-F(s) & m_{j_2}F_{j_1,j_2}(s) \\
       m_{j_1}F_{j_2,j_1}(s)& m_{j_2}F_{j_2,j_2}(s)-F(s) \\
     \end{bmatrix}=0,
 \end{equation}
or,
$$
F^2(s)-\Tr(j_1,j_2)(s)+D(j_1,j_2)(s)=0,
$$
where
$$
\Tr(j_1,j_2)(s)=m_{j_1}F_{j_1,j_1}(s)+m_{j_2}F_{j_2,j_2}(s),
$$
$$
D(j_1,j_2)(s)=m_{j_1}m_{j_2}(F_{j_1,j_1}(s)F_{j_2,j_2}(s)-F_{j_1,j_2}(s)F_{j_2,j_1}(s)).
$$
If $m_{j_2}=0$ we get that $D(j_1,j_2)=0$, which means that one of the roots of the quadratic equation is $F(s)=0$, which is not interesting for us. Therefore, we need to choose the root, which is given by
\begin{equation}\label{eq5:20}
    F(s)=\frac 12\left(\Tr(j_1,j_2)(s)+\sqrt{\Tr^2(j_1,j_2)(s)-4D(j_1,j_2)(s)}\right).
\end{equation}
Since we have $F(s)$, we can use one of the equations in \eqref{eq5:18} and adding \eqref{eq5:14}, we can find $p_{j_1}$ and $p_{j_2}$. Other $p_i$ are found by \eqref{eq5:17}. The final answer is
$$
p_{j_1}(s)=\frac{F(s)F_{j_1,\,j_2}(s)}{F(s)+m_{j_1}(F_{j_1,\,j_2}(s)-F_{j_1,\,j_1}(s))}\,,\quad
p_{j_2}(s)=\frac{F(s)(F(s)-m_{j_1}\,F_{j_1,\,j_1}(s))}{m_{j_2}(F(s)+m_{j_1}(F_{j_1,\,j_2}(s)-F_{j_1,\,j_1}(s)))}\,,
$$
$$p_i(s)= \frac{m_{j_1}\,
F_{i,\,j_1}(s) F_{j_1,\,j_2}(s)+F_{i,\,j_2}(s)(F(s)-m_{j_1}\,F_{j_1,\,j_1}(s))}{F(s)+m_{j_1}(F_{j_1,\,j_2}(s)-F_{j_1,\,j_1}(s))}\;;\quad
i\ne j_1\,,j_2\,.$$
\begin{remark}
In the case $m_{j_1}=m_{j_2}$ in the last expressions it is necessary to apply limit for $s\to0^+$.
\end{remark}
\subsection{Solution to system \eqref{eq1:5} in case of several positive fitnesses}
Let for indexes $0\leq {j_1}<\ldots <{j_l}\leq N$ it it true that $m_{j_k}>0,\,k=1,\ldots,l$, and for all other indexes $m_i=0$; assume that $m_i=0$ at least for one $i$. Generalizing the previous reasoning, we have
\begin{equation}\label{eq5:21}
    F(s)=\overline{m}=\sum_{k=1}^lm_{j_k}p_{j_k}(s),\quad \mu=\frac 12 sF(s).
\end{equation}
Function $F(s)$ is the largest root of the algebraic equation of degree $l$ with non-constant coefficients
\begin{equation}\label{eq5:22}
    \det(m_{j_k}F_{j_i,j_k}(s)-F\bs{I})=0.
\end{equation}
If among nonzero $m_{j_k}$ there is the biggest one, this root is simple (at least for small $s$). The solutions $p_{j_k}$ can be found as follows. In the system
$$
F(s)p_{j_i}(s)=\sum_{k=1}^lm_{j_k}F_{j_i,j_k}(s)p_{j_k}(s),\quad i=1,\ldots, l,
$$
one of the equations is changed for \eqref{eq5:21} and this system is solved for $p_{j_k}$. For all other $p_i$ such that $m_i=0$ the formulas
$$
p_i(s)=\sum_{k=1}^l\frac{m_{j_k}p_{j_k}(s)}{F(s)}F_{i,j_k}(s)
$$
are used.

\section{Proof of Lemma \ref{lemQ:1}}\label{append:D}
\begin{proof}[Proof of Lemma \ref{lemQ:1}] In this proof we will need to change the dimension of the problem, therefore, we introduce the notation $\bs C=\bs C^{(N)}=(c_{ka}^{(N)})$ and
$$
p_a^{(N)}=\frac{{N\choose a}}{2^{N}}\sum_{k=0}^{N}
\frac{c_{ka}^{(N)}}{1+2u\frac{k}{N}}\;,\quad 0\leq a\leq
N.
$$
We apply Abel's transformation to the last sum and obtain
$$
\sum_{k=0}^{N} c_{ka}^{(N)}x_k=\sum_{k=0}^{N-1}
S_{ka}^{(N)}(x_k-x_{k+1})+S_{Na}^{(N)}x_N\,;\quad
S_{ka}^{(N)}=\sum_{j=0}^{k}c_{ja}^{(N)}\;,\;\;x_k=\frac{1}{{1+2u\frac{k}{N}}}\,.
$$
We have $S_{Aa}^{(N)}=0$ and
$$
S_{ka}^{(N)}=c_{k,a-1}^{(N-1)}.
$$
Indeed, sum $S_{ka}^{(N)}$ is obtained as the coefficient at $t^k$ in the generating function
$$
(1+t+t^2+\ldots)(1-t)^a(1+t)^{N-a}=(1-t)^{a-1}(1+t)^{N-a}.
$$
Therefore, after transformation the expression for $p_a^{(N)}$ takes the form
$$
p_a^{(N)}=\frac{{N\choose a} u}{N
2^{N-1}}\sum_{k=0}^{N-1}
\frac{c_{k,a-1}^{(N-1)}}{\left(1+2u\frac{k}{N}\right)
\left(1+2u\frac{k+1}{N}\right)}\,.
$$
By applying Abel's transformation in the same spirit, after $a-1$ steps we obtain \eqref{eqQ:14}, which implies \eqref{eqQ:15}, which needs to be proved.

Consider the difference
$$
 \frac{1}{ 2^{N-a}}\sum_{k=0}^{N-a} \frac{{N-a\choose
k}}{\prod\limits_{j=k}^{k+a}\left(1+\frac{2u
j}{N}\right)}-\frac{1}{(1+u)^{a+1}}= \frac{1}{
2^{N-a}}\sum_{k=0}^{N-a} \!{N-a\choose k}\!\!\left(
\frac{1}{\prod\limits_{j=k}^{k+a}\left(1+\frac{2u
j}{N}\right)}- \frac{1}{(1+u)^{a+1}}\right).
$$
We note that for any numbers, and in particular for $x_j=\frac{1}{1+2u\frac{j}{N}}$ and $y=\frac{1}{1+u}$, it is true that
$$
x_kx_{k+1}\ldots x_{k+a}-y^{a+1}=\sum_{c=0}^a y^c (x_{k+c}-y)x_{k+c+1}x_{k+c+2}\ldots x_{k+a}.
$$
Applying the last expression to the difference above and using the fact that
$$
x_j-y=\frac{1}{1+2u\frac{j}{N}}-\frac{1}{1+u}=
\frac{u(N-2j)}{N\left(1+2u\frac{j}{N}\right)(1+u)}\,,
$$
we find
\begin{equation}\label{eqD:18}
\Delta^{(N)}_a=\frac{1}{ 2^{N-a}}\sum_{k=0}^{N-a} \frac{{N-a\choose
k}}{\prod\limits_{j=k}^{k+a}\left(1+\frac{2u
j}{N}\right)}-\frac{1}{(1+u)^{a+1}}= \frac{u}{ N
2^{N-a}}\sum_{k=0}^{N-a} {N-a\choose k}\sum_{j=k}^{k+a}
(N-2j)T_{kj},
\end{equation}
where the explicit expressions for the constants $T_{kj}$ are not important because all we need is $0\leq T_{kj}\leq 1$ being the product of numbers from $[0,1]$.

In the sum \eqref{eqD:18} there are equal numbers of positive ($2j<N$) and negative ($2j>N$) terms. Regroup the terms and find
$$
\Delta^{(N)}_a= \frac{u}{ N 2^{N-a}}\sum_{j=0}^{N}
(N-2j) \sum_{k=j-a}^{j} {N-a\choose k}
T_{kj}.
$$
Now we notice that
$$
\frac{{N\choose a} a! }{N^a}\Delta^{(N)}_a= \frac{{N\choose
a} a! }{N^a}\Delta^{(N)}_{a,+}-\frac{{N\choose a} a!
}{N^a}\Delta^{(N)}_{a,-}=:\Sigma_+-\Sigma_-,
$$
where
\begin{align*}
\Delta^{(N)}_{a,+}&=\frac{u}{ N 2^{N-a}}\sum_{j=0}^{L}
(N-2j) \sum_{k=j-a}^{j} {N-a\choose k} T_{kj},\\
\Delta^{(N)}_{a,-}&=\frac{u}{ N
2^{N-a}}\sum_{j=L+1}^{N} (2j-N) \sum_{k=j-a}^{j} {N-a\choose
k} T_{kj},\quad L=[N/2].
\end{align*}
Both sums $\Sigma_+$ and $\Sigma_-$ are non-negative and, due to the symmetry of the binomial coefficients, can be bounded by the same number depending on $N$. Here is an estimate for $\Sigma_+$:
$$
\Sigma_+\leq\frac{{N\choose a} a!
}{N^{a+1}}\frac{u}{ 2^{N-a}}\sum_{j=0}^{L} (N-2j)
\sum_{k=j-a}^{j} {N-a\choose k}\leq\frac{u\cdot 2^a}{
N^{a+1} 2^{N}}\sum_{j=0}^{L} (N-2j) \sum_{k=j-a}^{j}\!\!a!
{N\choose a} {N-a\choose k}.
$$
We also have
$$
\sum_{k=j-a}^{j}\frac{a! {N\choose a} {N-a\choose
k}}{N^a}=\sum_{k=j-a}^{j}{N\choose
k}\frac{(N-k)!}{(N-a-k)!
N^a}\leq\sum_{k=j-a}^{j}{N\choose k}\leq(a+1){N\choose
j}\,,
$$
because the binomial coefficients increase for $k\leq j\leq L=[N/2]$. This implies that
$$
\Sigma_+\leq\frac{u 2^a(a+1)}{ N
2^{N}}\sum_{j=0}^{L} (N-2j){N\choose j}\,.
$$
Now, using (see \eqref{eq3:6new}) $c_{ak}{N\choose k}=c_{ka}{N \choose a}$ and $S_{ka}^{(N)}=c_{k,a-1}^{(N-1)}$, we obtain
$$
\frac{1}{2^{N}}\sum_{j=0}^{L} (N-2j){N\choose
j}=\frac{1}{2^{N}}\sum_{j=0}^{L} c^{(N)}_{1,j}{N\choose j}=
\frac{1}{2^{N}}\sum_{j=0}^{L} c^{(N)}_{j,1}{N\choose
1}=\frac{N}{2^{N}}\sum_{j=0}^{L}
c^{(N)}_{j,1}=\frac{N}{2^{N}}{N-1\choose L}\,.
$$
By using the fact
$$
\frac{1}{2^{N-1}}{N-1\choose L}<\frac{1}{\sqrt{N}}\,,
$$
we finally arrive at
$$
\Sigma_+\leq\frac{u 2^a(a+1)}{ N
2^{N}}\sum_{j=0}^{L} (N-2j){N\choose j}<\frac{u
2^a(a+1)}{2\sqrt{N}}\,,
$$
and
\begin{equation}\label{eqD:25}
\frac{{N\choose a} a!
}{N^a}|\Delta^{(N)}_a|=|\Sigma_+-\Sigma_-|\leq
2\Sigma_+\leq\frac{u 2^a(a+1)}{\sqrt{N}}\,.
\end{equation}
Equations \eqref{eqD:18} and \eqref{eqD:25} yield
\begin{align*}
\left|p_a-\frac{u^a}{(1+u)^{a+1}}\right|&\le \left|p_a-\frac{{N\choose a} a!}{N^a}\frac{u^a}{(1+u)^{a+1}}\right|+\left(1-\frac{{N\choose a} a! }{N^a}\right)\frac{u^a}{(1+u)^{a+1}}\\
&\le \frac{{N\choose a} a!\,|\Delta_a^{N}| u^a} {N^a}+\left(1-\frac{{N\choose a} a!}{N^a}\right)\frac{u^a}{(1+u)^{a+1}}\\
&\le\frac{2^a(a+1)u^{a+1}}{\sqrt{N}}+\left(1-\frac{{N\choose a}\cdot a!}{N^a}\right)\frac{u^a}{(1+u)^{a+1}}\;,
\end{align*}
which concludes the proof of Lemma \ref{lemQ:1}.
\end{proof}

\section{Geometric view at the error threshold}\label{append:C}
In the main text we found that the critical mutation rate such that all the frequencies of the quasispecies distribution pass through the barycenter of $S_N$ can be found as $\overline{m}(\mu^*)=1+2\mu^*$ in the case of $\bs{M}=\diag(1+r,1,\ldots,1,1+r)$. As a next example consider matrix
$$
\bs M=m\diag(1,1+r,1,\ldots,1,1+r,1).
$$
Let $m=1$. The curve $\bs p(\mu)$ at
$$
\mu^*=\frac{r(N+2)}{N(N+1)}\,,\quad \overline{m}(\mu^*)=1+\frac{2r (N+2)}{N(N+1)}=1+2\mu^*,
$$
passes through the point
\begin{equation}\label{eq7:4}
    b_1=\frac{1}{N(N+1)}(1,N+2,N+2,\ldots,N+2,N+2,1)^\top
\end{equation}
of the simples $S_N$. Therefore $p_1(\mu^*)=\ldots=p_{N-1}(\mu^*)$ and $N-2$ graphs cross at $b_1$. For close matrices $\bs M'$ we shall also observe picture similar to the ``error threshold'' in Fig.~\ref{fig:1}. Note that since $b_1\neq b_0$ then \eqref{eq7:2} is not exact in this case.

Similarly, for the matrix
$$
\bs M=m\diag(1,1,1+r,1,\ldots,1,1+r,1,1)
$$
with $m=1$ we find that at
$$
\mu^*=\frac{r(N^2+3N+4)}{(N-1)N(N+1)},\quad \overline{m}(\mu^*)=1+\frac{2r(N^2+3N+4)}{(N-1)N(N+1)}=1+2\mu^*,
$$
the curve $\bs p(\mu)$ passes through the point
$$
b_2=\frac{2}{(N+1)N(N-1)}\Bigl(1,N+2,(N^2+3N+4)2^{-1},\ldots,(N^2+3N+4)2^{-1}, N+2,1\Bigr)^\top
$$
of $S_N$. Hence $p_2(\mu^*)=\ldots=p_{N-2}(\mu^*)$ and $N-4$ graphs cross at the same point; for close matrices $\bs M'$ for $\mu\approx \mu^*$ again the picture of the error threshold will be observed. The process can be continued for the points $b_0,b_1,b_2,b_3,\ldots$.

It is not necessary that the fitness values are symmetric as in all the cases above. Consider the matrix
$$
\bs{M}=\diag (1,1+\frac{rN}{N+2},1\ldots,1,1+r).
$$
We have that at
$$
\mu^*=\frac{r}{N+1},\quad \overline{m}(\mu^*)=1+2\mu^*=1+\frac{2r}{N+1}\,,
$$
the curve $\bs p(\mu)$ passes through the point
$$
b=\frac{1}{(N+1)}(1,N+2,\ldots,N+2)^\top,
$$
which means that $N$ out of $N+1$ graphs pass through the same point.


\paragraph{Acknowledgements:} This research is supported in part by the Russian Foundation for Basic Research (RFBR) grant \#10-01-00374 and joint
grant between RFBR and Taiwan National Council \#12-01-92004HHC-a. ASN's research is supported in part by ND EPSCoR and NSF grant \#EPS-0814442.


\end{document}